\newcommand{\lsi}{\left[\negthinspace\left[}
\newcommand{\rsi}{\right]\negthinspace\right]}
\newcommand{\lsir}{\left(\negthinspace\left(}
\newcommand{\M}{\mathcal{M}}
\newcommand{\Mloc}{\mathcal{M}_{\text{loc}}}
\newcommand{\mMloc}{\mathcal{M}_{\emph{loc}}}
\newcommand{\A}{\mathcal{A}}
\newcommand{\F}{\mathcal{F}}
\newcommand{\FF}{\mathbb{F}}
\newcommand{\GG}{\mathbb{G}}
\newcommand{\R}{I\!\! R}
\newcommand{\N}{I\!\!N}
\newcommand{\EE}{\mathbb{E}}
\newcommand{\PP}{\mathbb{P}}
\newcommand{\QQ}{\mathbb{Q}}
\newcommand{\hatK}{\widehat{K}}
\newcommand{\barA}{\bar{A}}
\newcommand{\barM}{\bar{M}}
\newcommand{\ind}{\mathbf{1}}
\newtheorem{theorem}{\bf Theorem}[section]
\newtheorem{proposition}[theorem]{\bf Proposition}
\newtheorem{lemma}[theorem]{\bf Lemma}
\theoremstyle{remark}
\newtheorem{remark}{\bf Remark}[section]
\newtheorem{definition}[theorem]{\bf Definition}
\newtheorem{example}{\bf Example}[section]
\numberwithin{equation}{section}
\renewcommand*\@fnsymbol[1]{\the#1}
\title{No-arbitrage conditions \\ and absolutely continuous changes of measure}
\author{\vspace{0.2cm} Claudio Fontana}
\affil{\normalsize{Laboratoire Analyse et Probabilit\'e,\\
Universit\'e d'\'Evry Val d'Essonne,\\
23 bd de France, 91037, \'Evry (France).} \\\vspace{0.2cm}
E-mail: \texttt{claudio.fontana@univ-evry.fr}}
\date{This version: March 4, 2014}
\begin{document}

\maketitle

\abstract{\begin{spacing}{1}\noindent 
We study the stability of several no-arbitrage conditions with respect to absolutely continuous, but not necessarily equivalent, changes of measure. We first consider models based on continuous semimartingales and show that no-arbitrage conditions weaker than NA and NFLVR are always stable. Then, in the context of general semimartingale models, we show that an absolutely continuous change of measure does never introduce arbitrages of the first kind as long as the change of measure density process can reach zero only continuously.
\end{spacing}}
\vspace{0.5cm}

\section{Introduction}	\label{S1}

The notion of no-arbitrage is of paramount importance in financial economics and mathematical finance. The search for an adequate no-arbitrage condition, economically meaningful and mathematically tractable, has a rather long history and we refer to \citet{S} for an overview of the main developments. An important step was marked by the papers \citet{DS94,DS98}, where the authors established the equivalence between the existence of an Equivalent (Local-/$\sigma$-) Martingale Measure and the validity of the No Free Lunch with Vanishing Risk (NFLVR) condition.

More recently, motivated mainly by developments in Stochastic Portfolio Theory (see e.g. \citet{FK} for an overview) and in the Benchmark Approach to Quantitative Finance (see e.g. \citet{PH}), an increasing attention has been paid to no-arbitrage concepts weaker than the classical NFLVR condition, as documented by the recent papers \citet{KK}, \citet{HS}, \citet{Ka1,Ka2}, \citet{FR}, \citet{Ruf}, \citet{Song} and \citet{T} (see also \citet{LW} for an earlier contribution to this strand of literature and \citet{H} and \citet{F} for a unifying analysis of the different no-arbitrage conditions appearing in the context of continuous semimartingale models). In particular, it has been demonstrated that the full strength of NFLVR may not be needed in order to solve portfolio optimisation problems as well as to perform pricing and hedging.  

In quantitative finance, the typical approach starts with the specification of a model under a given probability measure $\PP$, under the assumption that arbitrage profits (according to one of the many possible definitions) cannot be realised. One may then wonder whether such a no-arbitrage assumption is robust with respect to a change of measure from $\PP$ to an absolutely continuous, but not necessarily equivalent, measure $\QQ$. The present paper aims at answering this question.

In the mathematical finance literature, it has already been shown that absolutely continuous, but not equivalent, changes of measure may lead to arbitrage opportunities (in the classical sense of \citet{DS94}), see e.g. \citet{DS95a}, \citet{OR} and, more recently, \citet{RR} and \citet{CT}. However, this still leaves open the question of whether no-arbitrage conditions weaker than NA and NFLVR can be altered by absolutely continuous changes of measure. As we are going to show in Section \ref{S3}, the first (but  actually not very surprising) main result of this paper is that this is never the case in models based on continuous semimartingales.

Among the no-arbitrage conditions that are weaker than NFLVR, the condition of No Arbitrage of the First Kind (NA1), or, equivalently, No Unbounded Profit with Bounded Risk (see \citet{KK} and \citet{Ka1}), plays a particularly important role. Indeed, it has been shown that pricing and hedging can be satisfactorily performed as long as NA1 holds and, moreover, the same condition is the minimal one in order to solve portfolio optimisation problems in general semimartingale models. This motivates the study of the stability of NA1 with respect to absolutely continuous, but not necessarily equivalent, changes of measure in general semimartingale models. As already observed in \citet{KK} (see their Remark 4.13), the NA1 condition may not be preserved by an absolutely continuous change of measure when jumps are present. In that regard, the second main result of this paper shows that the loss of NA1 after a change of measure is intimately linked to the possibility of the density process jumping to zero (see Section \ref{S5}).

The remainder of the paper is structured as follows. Section \ref{S2} introduces the modeling framework for the first part of the paper, based on continuous semimartingales, and presents five different notions of arbitrage. In the continuous semimartingale setting, Section \ref{S3} shows that weak no-arbitrage conditions are invariant with respect to absolutely continuous changes of measure, while Section \ref{S4} shows by means of a counterexample that this is not the case for the classical NA and NFLVR conditions. Finally, Section \ref{S5} studies the stability of the NA1 condition in general semimartingale models.

\section{Continuous financial markets and no-arbitrage conditions} \label{S2}

Let $(\Omega,\F,\PP)$ be a given probability space endowed with a right-continuous filtration $\FF=(\F_t)_{0\leq t \leq T}$, where $T\in\left(0,\infty\right)$ represents a fixed time horizon. We denote by $\M(\PP)$ the family of all (uniformly integrable) $\PP$-martingales and by $\Mloc(\PP)$ the family of all $\PP$-local martingales. Without loss of generality, we assume that all elements of $\Mloc$ are c\`adl\`ag and we denote by $\M^c(\PP)$ and $\Mloc^c(\PP)$ the families of all continuous processes belonging to $\M(\PP)$ and $\Mloc(\PP)$, respectively.

We consider an abstract financial market with $d$ risky assets, whose discounted prices (with respect to a given reference security) are represented by the $\R^d$-valued \emph{continuous} semimartingale $S=(S_t)_{0\leq t \leq T}$, with $S_t=(S^1_t,\ldots,S^d_t)^{\top}$, with $^{\top}$ denoting transposition. Being $S$ a special semimartingale, its unique canonical decomposition can be written as $S=S_0+A+M$, where $A$ is a continuous $\R^d$-valued predictable process of finite variation and $M$ is an $\R^d$-valued process in $\Mloc^c(\PP)$ with $M_0=A_0=0$. By Proposition II.2.9 of \citet{JS}, we can write, for all $i,j=1,\ldots,d$,
\begin{equation}	\label{char}
A^i = \int\!a^idB
\qquad\text{and}\qquad
\langle S^i,S^j\rangle = \langle M^i,M^j\rangle = \int\!c^{ij} dB,
\end{equation}
for some continuous real-valued predictable strictly increasing process $B$ and where $a=(a^1,\ldots,a^d)^{\!\top}$ and $c=\bigl((c^{i1})_{1\leq i \leq d},\ldots,(c^{id})_{1\leq i \leq d}\bigr)$ are predictable processes taking values in $\R^d$ and in the cone of symmetric nonnegative $d\times d$ matrices, respectively. We do not necessarily assume that $S$ takes values in the positive orthant of $\R^d$. For all $t\in[0,T]$, let us denote by $c^+_t$ the Moore-Penrose pseudoinverse of the matrix $c_t$. The proof of Proposition 2.1 of \citet{DSpr} (see also \citet{DShi}, Lemma 4.3) shows that the process $c^+=(c^+_t)_{0\leq t \leq T}$ is predictable and, hence, the process $a$ can be represented as
\begin{equation}	\label{char-2}
a = c\,\lambda+\nu,
\end{equation}
where $\lambda=(\lambda_t)_{0\leq t \leq T}$ is defined by $\lambda_t:=c^+_ta_t$, for all $t\in[0,T]$, and $\nu=(\nu_t)_{0\leq t \leq T}$ is an $\R^d$-valued predictable process with $\nu_t\in\text{Ker}(c_t):=\{x\in\R^d:c_tx=0\}$, for all $t\in[0,T]$.

Let us now introduce the notion of \emph{admissible strategy}, assuming a frictionless financial market. Let $L(S;\PP):=L^2_{\text{loc}}(M;\PP)\cap L^0(A;\PP)$, where $L^2_{\text{loc}}(M;\PP)$ and $L^0(A;\PP)$ are the sets of all $\R^d$-valued predictable processes $H$ such that $\int_0^T\!H^{\top}_td\langle M,M\rangle_tH_t<\infty$ $\PP$-a.s. and $\int_0^T\!|H_t^{\top}dA_t|<\infty$ $\PP$-a.s., respectively. It is well-known that $L(S;\PP)$ is the largest set of predictable integrands with respect to the continuous semimartingale $S$ under the probability measure $\PP$. For $H\in L(S;\PP)$, we denote by $H\cdot S$ the stochastic integral process $\bigl(\int_0^t\!H_u\,dS_u\bigr)_{0\leq t \leq T}$, which has to be understood as a vector stochastic integral (see e.g. \citet{JS}, Section III.6). We are now in position to formulate the following classical definition.

\begin{definition}	\label{strategy}
Let $a\in\R_+$. An element $H\in L(S;\PP)$ is said to be an \emph{$a$-admissible strategy} if $H_0=0$ and 
$(H\cdot S)_t\geq -a$ $\PP$-a.s. for all $t\in[0,T]$. An element $H\in L(S;\PP)$ is said to be an  \emph{admissible strategy} if it is an $a$-admissible strategy for some $a\in\R_+$.
\end{definition}

For $a\in\R_+$, we denote by $\A_a(\PP)$ the set of all $a$-admissible strategies and by $\A(\PP)$ the set of all admissible strategies, i.e., $\A(\PP)=\bigcup_{a\in\R_+}\A_a(\PP)$. As usual, $H^i_t$ represents the number of units of asset $i$ held in the portfolio at time $t$. For $(x,H)\in\R\times\A(\PP)$, we define the \emph{gains from trading} process $G(H)=\bigl(G_t(H)\bigr)_{0\leq t \leq T}$ by $G_t(H):=(H\cdot S)_t$, for all $t\in[0,T]$, and the \emph{portfolio value} process $V(x,H):=x+G(H)$. This corresponds to consider portfolios generated by \emph{self-financing} admissible strategies.

We now introduce five different notions of arbitrage that will be considered in the present paper. 

\begin{definition}	\label{arb} 
\begin{itemize}
\item[(i)]
A strategy $H\in\A_0(\PP)$ is said to yield an \emph{increasing profit} if the process $G(H)$ is $\PP$-a.s. non-decreasing and $\PP\bigl(G_T(H)>0\bigr)>0$. If there exists no such strategy we say that the \emph{No Increasing Profit (NIP)} condition holds;
\item[(ii)]
a strategy $H\in\A_0(\PP)$ is said to yield a \emph{strong arbitrage opportunity} if $\PP\bigl(G_T(H)>0\bigr)>0$. If there exists no such strategy we say that the \emph{No Strong Arbitrage (NSA)} condition holds;
\item[(iii)]
a non-negative random variable $\xi$ is said to yield an \emph{arbitrage of the first kind} if $\PP(\xi>0)>0$ and for every $v\in(0,\infty)$ there exists an element $H^v\in\A_v(\PP)$ such that $V_T(v,H^v)\geq\xi$ $\PP$-a.s. If there exists no such random variable we say that the \emph{No Arbitrage of the First Kind (NA1)} condition holds;
\item[(iv)]
a strategy $H\in\A(\PP)$ is said to yield an \emph{arbitrage opportunity} if $G_T(H)\geq0$ $\PP$-a.s. and  $\PP\bigl(G_T(H)>0\bigr)>0$. If there exists no such strategy we say that the \emph{No Arbitrage (NA)} condition holds;
\item[(v)]
a sequence $\{H^n\}_{n\in\N}\subset\A(\PP)$ is said to yield a \emph{Free Lunch with Vanishing Risk} if there exist a positive constant $\varepsilon>0$ and an increasing sequence $\{\delta_n\}_{n\in\N}$ with $0\leq\delta_n\nearrow1$ such that $\PP\bigl(G_T(H^n)>-1+\delta_n\bigr)=1$ and $\PP\bigl(G_T(H^n)>\varepsilon\bigr)\geq\varepsilon$. If there exists no such sequence we say that the \emph{No Free Lunch with Vanishing Risk (NFLVR)} condition holds.
\end{itemize}
\end{definition}

The NIP condition corresponds to the notion of No Unbounded Increasing Profit introduced in \citet{KK} and represents the weakest condition among those listed above.\footnote{Note that if $H$ yields an increasing profit in the sense of Definition \ref{arb}-(i), it holds that $H^n:=nH\in\A_0(\PP)$ and $G(H^n)\geq G(H)$, for every $n\in\N$. This means that the increasing profit generated by $H$ can be scaled to arbitrarily large levels of wealth, thus explaining the adjective unbounded.} The NSA condition corresponds to the notion of absence of arbitrage opportunities adopted in Section 3 of \citet{LW} as well as to the NA$^+$ condition studied in \citet{Str}. The notion of arbitrage of the first kind goes back to \citet{Ing}, while the formulation adopted in Definition \ref{arb} is due to \citet{Ka1}. In particular, the NA1 condition is equivalent to the notion of No Unbounded Profit with Bounded Risk of \citet{KK} (see \citet{Ka1}, Proposition 1), which in turn corresponds to the BK condition considered in \citet{Kab}. Finally, the NA and NFLVR conditions are classical (see e.g. \citet{DS94}).

We close this section by recalling the probabilistic characterisations of the conditions introduced in Definition \ref{arb}, referring to \citet{F} for a more detailed analysis of the no-arbitrage properties of financial models based on continuous semimartingales.
As a preliminary, let the \emph{mean-variance trade-off process} $\hatK=(\hatK_t)_{0\leq t \leq T}$ be defined as (see e.g. \citet{Sch})
\begin{equation}	\label{MVT}
\hatK_t := \int_0^t\!\lambda^{\top}_ud\langle M,M\rangle_u\,\lambda_u
= \int_0^t\!a^{\top}_uc^+_ua_u\,dB_u,
\qquad \text{for all }t\in[0,T].
\end{equation}
Let us also define $\hatK_s^t:=\hatK_t-\hatK_s$, for all $s,t\in\left[0,T\right]$ with $s\leq t$, and the (possibly infinite valued) stopping time
\begin{equation}	\label{sigma}
\sigma:=\inf\bigl\{t\in[0,T]:\hatK_t^{t+h}=\infty,\forall h\in\left(0,T-t\right]\bigr\}\,,
\end{equation}
with the usual convention $\inf\emptyset=\infty$.
The next proposition provides necessary and sufficient conditions for the validity of the no-arbitrage conditions introduced in Definition \ref{arb}, collecting several important results obtained by \citet{DS94,DS95b}, \citet{Kab}, \citet{Str}, \citet{KK} and \citet{Ka1}. In the current formulation, a proof can be found in \citet{F}.

\begin{proposition}	\label{arb-char}
The following hold:
\begin{itemize}
\item[(i)] NIP holds if and only if $\nu=0$ $\PP\otimes B$-a.e.;
\item[(ii)] NSA holds if and only if $\nu=0$ $\PP\otimes B$-a.e. and $\sigma=\infty$ $\PP$-a.s.;
\item[(iii)] NA1 holds if and only if $\nu=0$ $\PP\otimes B$-a.e. and $\hatK_T<\infty$ $\PP$-a.s.;
\item[(iv)] NFLVR holds if and only if NA1 and NA hold;
\item[(v)] NFLVR holds if and only if there exists $\QQ\sim\PP$ such that $S\in\mMloc^c(\QQ)$;
\end{itemize}
where $\nu$, $\sigma$ and $\hatK$ are defined in \eqref{char-2}, \eqref{sigma} and \eqref{MVT}, respectively.
\end{proposition}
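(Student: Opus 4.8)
The plan is to prove Proposition~\ref{arb-char} by assembling the probabilistic characterisations from the decomposition \eqref{char-2} and the structure of admissible strategies, treating each item in order of increasing strength, since the characterisations are nested.

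\textbf{Items (i) and (ii).}
First I would analyse NIP. The key observation is that a strategy $H\in\A_0(\PP)$ yields an increasing profit precisely when $G(H)=H\cdot S$ is nondecreasing and nonconstant; writing $H\cdot S = H\cdot A + H\cdot M$, the finite-variation part must dominate, which forces the local martingale part $H\cdot M$ to vanish, i.e. $H^{\top}cH=0$ $\PP\otimes B$-a.e., equivalently $cH=0$ (so $H\in\mathrm{Ker}(c)$), while $H^{\top}a = H^{\top}(c\lambda+\nu) = H^{\top}\nu \geq 0$ with strict inequality on a set of positive $\PP\otimes B$-measure. So NIP fails iff one can find a predictable $H$ valued in $\mathrm{Ker}(c)$ with $H^{\top}\nu>0$ on a non-null set; taking $H = \mathbf{1}_{\{\nu\neq 0\}}\,\nu$ (suitably localised to lie in $L(S;\PP)$ and rescaled so that $H_0=0$) shows that NIP fails iff $\nu\neq 0$ on a set of positive $\PP\otimes B$-measure. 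This gives (i). For (ii), given NIP (so $\nu=0$ and $a=c\lambda$), a strong arbitrage is an $H\in\A_0(\PP)$ with $G_T(H)>0$ on a non-null set but $G(H)$ not necessarily monotone; the point is that under $\nu=0$ the drift can be absorbed and $S$ behaves like a local martingale up to an absolutely continuous time change governed by $\hatK$. A strong arbitrage can be built exactly when $\hatK$ explodes over every right-neighbourhood of some time, i.e. when $\sigma<\infty$ with positive probability, by investing in the direction $\lambda$ on the stochastic interval where the mean--variance trade-off accumulates infinitely fast (this is the content of \citet{Str} and \citet{DS95b}); conversely $\sigma=\infty$ a.s.\ rules this out. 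Hence NSA $\iff$ $\nu=0$ and $\sigma=\infty$ a.s.

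\textbf{Items (iii)--(v).}
For (iii), again assume $\nu=0$. The candidate for the ``num\'eraire portfolio'' / minimal-loss process is the solution of $d\hat{V} = \hat{V}\lambda^{\top}dM$, i.e. $\hat{V}=\mathcal{E}(\lambda\cdot M)$, whose reciprocal $\hatZ := 1/\hat{V}$ is a nonnegative local martingale (a supermartingale deflator) precisely when $\hatK_T<\infty$ a.s., which is the condition for $\lambda\in L^2_{\mathrm{loc}}(M;\PP)$ and for $\mathcal{E}(\lambda\cdot M)$ to be a well-defined strictly positive process. When $\hatK_T<\infty$, $\hatZ$ deflates every admissible portfolio value to a supermartingale, ruling out arbitrage of the first kind (this is the Kardaras/Karatzas--Kardaras argument, \citet{KK,Ka1}); when $\PP(\hatK_T=\infty)>0$, one explicitly constructs, for every initial capital $v>0$, a $v$-admissible strategy whose terminal value dominates a fixed nonnegative non-null $\xi$, using the explosion of $\hatK$ on a non-null set --- this is the classical construction showing failure of NA1. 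This yields (iii). Item (iv), NFLVR $\iff$ NA1 $\wedge$ NA, is a general fact (NFLVR decomposes into NA1, equivalently NUPBR, plus NA; see \citet{DS94}, \citet{Kab}, \citet{KK}); in the continuous setting I would simply cite it. For (v), NFLVR $\iff$ existence of $\QQ\sim\PP$ with $S\in\mathcal{M}^c_{\mathrm{loc}}(\QQ)$ is the Fundamental Theorem of Asset Pricing of \citet{DS94} (in the locally bounded, hence continuous, case a local martingale measure exists), so again I would invoke that theorem directly.

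\textbf{Main obstacle.}
The routine algebra ($H^{\top}a=H^{\top}\nu$ on $\mathrm{Ker}(c)$, the pseudoinverse identities) is harmless. The genuinely delicate points are (a) making the candidate arbitrage strategies legitimately admissible, i.e.\ checking $H\in L(S;\PP)=L^2_{\mathrm{loc}}(M;\PP)\cap L^0(A;\PP)$ and the appropriate lower bound $(H\cdot S)\geq -a$, which requires careful localisation along the stochastic intervals where $\hatK$ accumulates; and (b) the ``only if'' direction of (ii) and (iii), namely explicitly exhibiting a strong arbitrage when $\sigma<\infty$ and an arbitrage of the first kind when $\hatK_T=\infty$ on a non-null set --- the constructions are standard but technical, and this is where I would lean on \citet{Str}, \citet{DS95b}, \citet{KK} and \citet{Ka1} rather than reproduce the full details; the clean self-contained write-up is in \citet{F}.
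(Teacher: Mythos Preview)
Your proposal is correct and in fact more detailed than what the paper does: the paper does not prove Proposition~\ref{arb-char} at all but simply records it as a collection of known results from \citet{DS94,DS95b}, \citet{Kab}, \citet{Str}, \citet{KK} and \citet{Ka1}, referring to \citet{F} for a self-contained proof in the present formulation. Your sketch follows exactly the line of argument in those references (and you yourself point to \citet{F} at the end), so there is no discrepancy in approach---only a minor imprecision in (iii), where the num\'eraire portfolio should satisfy $d\hat V=\hat V\lambda^{\top}dS$ rather than $d\hat V=\hat V\lambda^{\top}dM$, the relevant deflator being $\hat Z=\mathcal{E}(-\lambda\cdot M)$; this does not affect the validity of your outline.
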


\section{Absolutely continuous changes of measure}	\label{S3}

Let $\QQ$ be a probability measure on $(\Omega,\F)$ with $\QQ\ll\PP$, but not necessarily equivalent to $\PP$. It is well known (see e.g. \citet{JS}, Section III.3) that there exists a unique (up to $\PP$- and $\QQ$-indistinguishability) non-negative $\PP$-martingale $Z=(Z_t)_{0\leq t \leq T}$ that satisfies $Z_t=d\QQ|_{\F_t}/d\PP|_{\F_t}$, for all $t\in[0,T]$ (note however that $Z$ is not necessarily continuous). Moreover, we also have $\QQ\bigl(Z_t>0\text{ and }Z_{t-}>0\text{ for all }t\in\left[0,T\right]\bigr)=1$. The Girsanov-Lenglart theory of absolutely continuous changes of measure (see \citet{Len})  allows us to compute the canonical decomposition of $S$ with respect to $\QQ$.

\begin{lemma}	\label{can-dec}
Let $\QQ$ be a probability measure on $(\Omega,\F)$ such that $\QQ\ll\PP$. The canonical decomposition of $S$ with respect to $\QQ$ is given by $S=S_0+\barA+\barM$, where
\begin{equation}	\label{can-dec-1}	\begin{aligned}
\barA &:= \int\!\left(c\,\bar{\lambda}+\bar{\nu}\right)dB\,,
\quad \text{with } \bar{\lambda}:=\lambda+\theta/Z_- \text{ and }\, \bar{\nu}:=\nu\,,	\\
\barM &:= M-\int\!\frac{c\,\theta}{Z_-}\,dB\,\in\,\mMloc^c(\QQ)\,,
\end{aligned}	\end{equation}
where the $\R^d$-valued predictable process $\theta=(\theta_t)_{0\leq t \leq T}\in L^2_{\emph{loc}}(M;\PP)$ can be chosen such that $\theta=c^+d\langle M,Z\rangle/dB$ $\PP\otimes B$-a.e.
\end{lemma}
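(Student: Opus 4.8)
The plan is to apply the Girsanov–Lenglart theorem for absolutely continuous changes of measure directly to the continuous local martingale part $M$ and then read off the canonical decomposition of $S$ under $\QQ$. First I would recall that, since $\QQ\ll\PP$ with density process $Z$, the Girsanov–Lenglart result (see \citet{Len}) states that if $N\in\Mloc(\PP)$ then $N-\int\frac{1}{Z_-}d\langle N,Z\rangle$ is a $\QQ$-local martingale, where $\langle N,Z\rangle$ denotes the $\PP$-predictable covariation (computed along a localising sequence; the process $\int\frac{1}{Z_-}d\langle N,Z\rangle$ is well defined because $Z_->0$ $\QQ$-a.s.). Applying this componentwise to $M=(M^1,\dots,M^d)\in\Mloc^c(\PP)$, and noting that $M$ is continuous so that $\langle M^i,Z\rangle$ is continuous and $\barM:=M-\int\frac{1}{Z_-}d\langle M,Z\rangle$ remains continuous, we obtain $\barM\in\Mloc^c(\QQ)$.

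Next I would justify the representation $\langle M,Z\rangle=\int c\,\theta\,dB$. Since $\langle M^i,M^j\rangle=\int c^{ij}dB$ by \eqref{char}, the Kunita–Watanabe inequality shows that each $\langle M^i,Z\rangle$ is absolutely continuous with respect to $B$, so we may write $d\langle M,Z\rangle/dB=:\eta$ for some $\R^d$-valued predictable process $\eta$. The key point is that $\eta_t$ lies in the range of $c_t$ for $\PP\otimes B$-a.e.\ $(\omega,t)$: this follows because any increment of $\langle M^i,Z\rangle$ over a set where a linear combination $x^\top M$ is constant (i.e.\ $x\in\mathrm{Ker}(c_t)$) must vanish, by Kunita–Watanabe again. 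Hence $c_t c_t^+\eta_t=\eta_t$, and setting $\theta:=c^+\eta=c^+d\langle M,Z\rangle/dB$ gives $c\,\theta=\eta$, i.e.\ $\langle M,Z\rangle=\int c\,\theta\,dB$; the same argument as for the predictability of $c^+$ (proof of Proposition 2.1 of \citet{DSpr}) ensures $\theta$ is predictable, and $\theta\in L^2_{\text{loc}}(M;\PP)$ because $\int_0^T\theta_t^\top c_t\theta_t\,dB_t=\int_0^T\eta_t^\top c_t^+\eta_t\,dB_t<\infty$ follows from the finiteness of the quadratic variations involved. Consequently $\barM=M-\int\frac{c\,\theta}{Z_-}dB\in\Mloc^c(\QQ)$.

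Finally I would assemble the canonical decomposition under $\QQ$. Writing $S=S_0+A+M=S_0+\int a\,dB+M$ and substituting $M=\barM+\int\frac{c\,\theta}{Z_-}dB$ yields $S=S_0+\int\bigl(a+\frac{c\,\theta}{Z_-}\bigr)dB+\barM$. Using $a=c\lambda+\nu$ from \eqref{char-2}, the drift coefficient becomes $c\lambda+\nu+\frac{c\,\theta}{Z_-}=c\bigl(\lambda+\theta/Z_-\bigr)+\nu=c\bar\lambda+\bar\nu$ with $\bar\lambda=\lambda+\theta/Z_-$ and $\bar\nu=\nu$, so $\barA=\int(c\bar\lambda+\bar\nu)dB$ is continuous, predictable and of finite variation (the finite-variation property of $\int\frac{c\theta}{Z_-}dB$ over $[0,T]$ holds $\QQ$-a.s.\ since $Z_->0$ $\QQ$-a.s., and this is all that is required for a canonical decomposition under $\QQ$). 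Uniqueness of the canonical decomposition of the $\QQ$-special semimartingale $S$ then identifies $(\barA,\barM)$ as claimed, and the fact that $\bar\nu=\nu\in\mathrm{Ker}(c_t)$ confirms the decomposition is of the same structural form as \eqref{char-2}.

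The main obstacle is the second step: carefully establishing that $d\langle M,Z\rangle/dB$ takes values in the range of $c_t$ so that the pseudoinverse representation $\theta=c^+d\langle M,Z\rangle/dB$ is valid, together with the measurability and integrability of $\theta$. Everything else is a direct application of Girsanov–Lenglart plus bookkeeping with the canonical decomposition.
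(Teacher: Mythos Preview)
Your proof is correct and follows essentially the same route as the paper: apply the Girsanov--Lenglart theorem to obtain $\barM=M-\int Z_-^{-1}d\langle M,Z\rangle\in\Mloc^c(\QQ)$, represent $d\langle M,Z\rangle/dB$ in the form $c\,\theta$, and then read off $\barA$ from the $\PP$-canonical decomposition. The only genuine difference lies in how the representative $\theta$ is produced. The paper invokes the Galtchouk--Kunita--Watanabe decomposition of $Z$ with respect to $M$ to obtain directly a process $\theta_0\in L^2_{\text{loc}}(M;\PP)$ with $d\langle M,Z\rangle=c\,\theta_0\,dB$, and only afterwards replaces $\theta_0$ by $\tilde\theta:=c^+c\,\theta_0$ to enforce the explicit formula $\tilde\theta=c^+d\langle M,Z\rangle/dB$. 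You instead argue from scratch that $\eta:=d\langle M,Z\rangle/dB$ exists, lies in the range of $c_t$ $\PP\otimes B$-a.e., and that $\theta:=c^+\eta$ belongs to $L^2_{\text{loc}}(M;\PP)$. Your argument is more elementary in that it avoids citing the GKW decomposition, but the price is precisely the ``main obstacle'' you flag: the range condition and the local square-integrability of $\theta$ require some care (e.g.\ handling the $t$-dependence of $\mathrm{Ker}(c_t)$ and bounding $\int\eta^\top c^+\eta\,dB$), whereas the GKW decomposition delivers both for free. Either way, the two arguments converge on the same $\theta$ and the same decomposition.
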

\begin{proof}
Theorem 1 of \citet{Len} implies that $S$ is a continuous semimartingale with respect to $\QQ$ and, hence, it admits a unique canonical decomposition under $\QQ$. Since $S$ is continuous, the predictable quadratic variation $\langle M,Z\rangle$ always exists and the process $\barM:=M-\int\!Z^{-1}_-d\langle M,Z\rangle$ belongs to $\Mloc^c(\QQ)$, see Theorem 2 of \citet{Len}. By applying the Galtchouk-Kunita-Watanabe decomposition to $Z$ with respect to $M$ (see \citet{AS}), we get the existence of a process $\theta\in L^2_{\text{loc}}(M;\PP)$ such that $d\langle M,Z\rangle_t=d\langle M,M\rangle_t\theta_t=c_t\,\theta_t\,dB_t$. Together with \eqref{char}-\eqref{char-2}, this gives the canonical decomposition \eqref{can-dec-1} of $S$ with respect to the measure $\QQ$. 
Finally, define the process $\tilde{\theta}:=c^+c\,\,\theta$, so that $\tilde{\theta}=c^+d\langle M,Z\rangle/dB$ holds $\PP\otimes B$-a.e. Since $c^+$ is predictable and using the properties of the Moore-Penrose pseudoinverse, it is easy to check that $\tilde{\theta}$ belongs to $L^2_{\text{loc}}(M;\PP)$ and satisfies $c\,\tilde{\theta}=c\,\theta$.
\end{proof}

Similarly as in the previous section, let the process $\hatK^{\QQ}=(\hatK^{\QQ}_t)_{0\leq t \leq T}$ be defined by
\[
\hatK^{\QQ}_t =: \int_0^t\!\bar{\lambda}^{\top}_ud\langle\barM,\barM\rangle_u\,\bar{\lambda}_u\,,
\qquad\text{for all }t\in[0,T]\,,
\]
where $\bar{\lambda}$ and $\barM$ are as in Lemma \ref{can-dec}. Furthermore, let us define the (possibly infinite valued) stopping time $\sigma^{\QQ}$ by
\[
\sigma^{\QQ} := \inf\bigl\{t\in[0,T]:\hatK^{\QQ}_{t+h}-\hatK^{\QQ}_t=\infty,\forall h\in\left(0,T-t\right]\bigr\}\,.
\]

The next theorem is the main result of the first part of the present paper and shows that \emph{weak} no-arbitrage conditions (i.e., NIP, NSA and NA1), as opposed to \emph{strong} no-arbitrage conditions (i.e., NA and NFLVR), are always stable with respect to absolutely continuous changes of measure and not only to equivalent changes of measure. The proof is a rather direct consequence of Proposition \ref{arb-char} and Lemma \ref{can-dec}, but we prefer to give full details for the reader's convenience.

\begin{theorem}	\label{main}
Let $\QQ$ be a probability measure on $(\Omega,\F)$ such that $\QQ\ll\PP$. Then the following hold:
\begin{itemize}
\item[(i)] if NIP holds with respect to $\PP$, then NIP holds with respect to $\QQ$ as well;
\item[(ii)] if NSA holds with respect to $\PP$, then NSA holds with respect to $\QQ$ as well;
\item[(iii)] if NA1 holds with respect to $\PP$, then NA1 holds with respect to $\QQ$ as well.
\end{itemize}
\end{theorem}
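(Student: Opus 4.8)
The plan is to reduce each of the three claims to its probabilistic characterisation in Proposition~\ref{arb-char} and then check that the relevant quantities are preserved, or only improved, under the passage from $\PP$ to $\QQ$. The crucial structural fact, supplied by Lemma~\ref{can-dec}, is that the drift correction introduced by the Girsanov--Lenglart change of measure is of the form $c\,\theta/Z_-$, i.e. it lies in the range of $c$; consequently $\bar\nu=\nu$ exactly, while $\bar\lambda=\lambda+\theta/Z_-$ differs from $\lambda$ only by a vector that $c$ sees. This immediately disposes of part~(i): NIP holds under $\PP$ iff $\nu=0$ $\PP\otimes B$-a.e., and since $\QQ\ll\PP$ the set $\{\bar\nu\neq 0\}=\{\nu\neq 0\}$ is still $\QQ\otimes B$-null, so NIP holds under $\QQ$. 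One subtlety to mention is that a $\QQ$-admissible strategy need not be $\PP$-admissible, so the argument genuinely must go through the analytic characterisation rather than a direct transfer of strategies; but Proposition~\ref{arb-char} is stated as an ``if and only if'', so this is legitimate.

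For part~(iii) the key computation is that $\hatK^{\QQ}_T<\infty$ $\QQ$-a.s. whenever $\hatK_T<\infty$ $\PP$-a.s. Writing $\bar\lambda=\lambda+\theta/Z_-$ and using $\langle\barM,\barM\rangle=\langle M,M\rangle$ (continuity, so no jump contribution to the bracket), one expands
\[
\hatK^{\QQ}_t=\int_0^t\!\bar\lambda^{\top}_u\,c_u\,\bar\lambda_u\,dB_u
=\hatK_t+2\int_0^t\!\frac{\lambda^{\top}_u c_u\theta_u}{Z_{u-}}\,dB_u
+\int_0^t\!\frac{\theta^{\top}_u c_u\theta_u}{Z_{u-}^2}\,dB_u .
\]
The last term is the increasing process $\langle\int Z_-^{-1}\,d\langle M,Z\rangle\,,\,\cdot\,\rangle$-type quantity associated with $\theta$; since $\theta\in L^2_{\mathrm{loc}}(M;\PP)$ one has $\int_0^T\theta^{\top}_u c_u\theta_u\,dB_u<\infty$ $\PP$-a.s., and on $\{Z_->0$ on $[0,T]\}$ --- which is a set of full $\QQ$-measure --- the factor $1/Z_{u-}^2$ is bounded along each path (the path of $Z_-$ is làglàd and strictly positive, hence bounded away from $0$ on the compact $[0,T]$), so the third term is $\QQ$-a.s. finite. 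The middle (cross) term is then controlled by Cauchy--Schwarz from the first and third. Hence $\hatK^{\QQ}_T<\infty$ $\QQ$-a.s., and combined with $\bar\nu=\nu=0$ $\QQ\otimes B$-a.e. this gives NA1 under $\QQ$ by Proposition~\ref{arb-char}(iii).

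Part~(ii) is the one requiring the most care, because it involves the stopping time $\sigma$ rather than the value $\hatK_T$. NSA under $\PP$ means $\nu=0$ $\PP\otimes B$-a.e. and $\sigma=\infty$ $\PP$-a.s., i.e. from no time $t$ onward does $\hatK$ accumulate infinite mass on every right-neighbourhood of $t$. The drift part $\bar\nu=\nu=0$ is handled as before. For the $\sigma$-part I would argue that, on the $\QQ$-full set where $Z_-$ is strictly positive and hence locally bounded away from zero along each path, the measures $d\hatK$ and $d\hatK^{\QQ}$ are mutually absolutely continuous as path-measures on $[0,T]$ --- indeed from the expansion above $d\hatK^{\QQ}=\bar\lambda^{\top}c\bar\lambda\,dB$ and $d\hatK=\lambda^{\top}c\lambda\,dB$, and a $\sigma$-blow-up at a time $t$ is precisely the statement that $\int_t^{t+h}\lambda^{\top}c\lambda\,dB=\infty$ for all $h>0$. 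Using $|\bar\lambda^{\top}c\bar\lambda-\lambda^{\top}c\lambda|$ bounded in terms of $\lambda^{\top}c\lambda$, $\theta^{\top}c\theta$ and $1/Z_-^2$, and the path-finiteness of $\int_0^T\theta^{\top}c\theta\,dB$, one sees that $\hatK^{\QQ}$ blows up on a right-neighbourhood of $t$ if and only if $\hatK$ does, so $\sigma^{\QQ}=\sigma=\infty$ $\QQ$-a.s., giving NSA under $\QQ$. The main obstacle, and the point I would spell out carefully, is exactly this local-boundedness-of-$1/Z_-$ argument: it is the only place where the hypothesis ``$Z$ can vanish only continuously'' is implicitly irrelevant in the \emph{continuous-}$S$ setting (here $M$, and hence the correction term, is continuous regardless), but one still needs the $\QQ$-a.s. strict positivity and làglàd-path boundedness of $Z_-$ on the compact interval to turn the $\PP$-a.s. integrability of $\theta$ into $\QQ$-a.s. control of the perturbed quantities.
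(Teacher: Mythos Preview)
Your argument is correct and follows essentially the same route as the paper: both proofs reduce to Proposition~\ref{arb-char} via Lemma~\ref{can-dec}, use $\bar\nu=\nu$ for part~(i), and control $\hatK^{\QQ}$ by $\hatK$ plus the $\QQ$-a.s.\ finite term $\int Z_-^{-2}\theta^{\top}c\,\theta\,dB$ for parts~(ii)--(iii). The only cosmetic differences are that the paper obtains the one-sided bound $\hatK^{\QQ}_t\leq 2\hatK_t+2\int_0^t Z_{u-}^{-2}\theta_u^{\top}c_u\theta_u\,dB_u$ directly (via $(a+b)^{\top}c(a+b)\leq 2a^{\top}ca+2b^{\top}cb$) rather than expanding and handling the cross term by Cauchy--Schwarz, and it only records $\sigma^{\QQ}\geq\sigma$ rather than your (true but unneeded) ``if and only if''; your phrase about $d\hatK$ and $d\hatK^{\QQ}$ being ``mutually absolutely continuous as path-measures'' is imprecise, but the substantive estimate you give immediately afterwards is exactly what is required.
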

\begin{proof}
If NIP holds with respect to $\PP$, Lemma \ref{can-dec} and part \emph{(i)} of Proposition \ref{arb-char} imply that $\bar{\nu}=\nu=0$ $\PP\otimes B$-a.e. and also $\QQ\otimes B$-a.e., since $\QQ\ll\PP$. The first claim then follows again from part \emph{(i)} of Proposition \ref{arb-char} (now applied under $\QQ$). In order to prove parts \emph{(ii)}-\emph{(iii)}, observe that, for every $t\in[0,T]$:
\begin{equation}	\label{main-1}
\hatK^{\QQ}_t = \int_0^t\!\left(\lambda_u+\frac{\theta_u}{Z_{u-}}\right)^{\!\!\top}\!
c_u\left(\lambda_u+\frac{\theta_u}{Z_{u-}}\right)dB_u
\leq 2\,\hatK_t + 2\int_0^t\!\frac{1}{Z_{u-}^2}\,\theta^{\top}_uc_u\,\theta_u\,dB_u\,,
\end{equation}
due to the Kunita-Watanabe inequality together with \eqref{MVT}.
Note that, due to Lemma \ref{can-dec}, we have $\theta\in L^2_{\text{loc}}(M;\PP)\subseteq L^2_{\text{loc}}(\barM;\QQ)$, where the last inclusion follows from the assumption that $\QQ\ll\PP$. Since the process $Z_-$ is $\QQ$-a.s. strictly positive, adapted and left-continuous, hence locally bounded, the second term on the right-hand side of \eqref{main-1} is $\QQ$-a.s. finite for every $t\in[0,T]$. In particular, this implies that $\sigma^{\QQ}\geq\sigma$ $\QQ$-a.s. By part \emph{(ii)} of Proposition \ref{arb-char}, if NSA holds with respect to $\PP$, then $\sigma=\infty$ $\PP$-a.s. and also $\QQ$-a.s., since $\QQ\ll\PP$, thus giving $\sigma^{\QQ}=\infty$ $\QQ$-a.s. Then, again part \emph{(ii)} of Proposition \ref{arb-char} implies that NSA holds with respect to $\QQ$. Similarly, due to part \emph{(iii)} of Proposition \ref{arb-char}, if NA1 holds with respect to $\PP$ then $\hatK_T<\infty$ $\PP$-a.s. and also $\QQ$-a.s., since $\QQ\ll\PP$. By \eqref{main-1}, this implies that $\hatK^{\QQ}_T<\infty$ $\QQ$-a.s. and, hence, again by part \emph{(iii)} of Proposition \ref{arb-char}, NA1 holds with respect to $\QQ$.
\end{proof}

\begin{remark}
The condition $\hatK_T<\infty$ $\PP$-a.s., which characterises NA1, is called \emph{finiteness condition} in \citet{DShi}. In Lemma 4.5 of that paper, it is shown that the finiteness condition is stable under equivalent changes of measure. A result analogous to part \emph{(iii)} of Theorem \ref{main} (albeit with a different terminology) is also established in Proposition 2.7 and Remark 2.10 of \citet{CS}. 
\end{remark}

We close this section by pointing out that, if we restrict our attention to \emph{equivalent} changes of measure, rather than only absolutely continuous changes of measure, then also the NA and NFLVR conditions are stable. Indeed, suppose that NA holds with respect to $\PP$ and let $\QQ\sim\PP$. Arguing by contradiction, suppose that there exists a strategy $H\in\A(\QQ)$ which realises an arbitrage opportunity under $\QQ$, in the sense of part (iv) of Definition \ref{arb}. Then, since $\PP\ll\QQ$, Proposition III.6.24 of \citet{JS} shows that $H\in L(S;\PP)$ and the stochastic integral $H\cdot S$ viewed with respect to $\PP$ coincides with the stochastic integral with respect to $\QQ$. Moreover, since $\QQ\sim\PP$, we also have $H\in\A(\PP)$ as well as
$G_T\left(H\right)\geq 0$ $\PP$-a.s. and $\PP\bigl(G_T\left(H\right)>0\bigr)>0$, thus contradicting the validity of NA with respect to $\PP$. An analogous reasoning allows to show the stability of NFLVR with respect to equivalent changes of measure.
Note also that this argument does not rely on the continuity of $S$.

\section{The NA and NFLVR conditions: a counterexample}	\label{S4}

As shown in Theorem \ref{main}, in the context of continuous semimartingale models, the \emph{weak} NIP, NSA and NA1 conditions are always stable with respect to absolutely continuous changes of measure. We now show that, in general, the classical \emph{strong} NA and NFLVR conditions are not robust with respect to absolutely continuous changes of measure, even for continuous processes. We proceed to illustrate this phenomenon by means of a counterexample, which has been already developed in \citet{DS95a} in relation to strict local martingales (i.e., local martingales which fail to be martingales, see e.g. \citet{ELY}).

Let $W=(W_t)_{0\leq t \leq T}$ be a real-valued Brownian motion starting from $W_0=1$ and the stopping time $\tau:=\inf\left\{t\in[0,T]:W_t=0\right\}\wedge T$. We define $S$ as the stopped process $S:=W^{\tau}$ and let the filtration $\FF$ be the $\PP$-augmented natural filtration of $S$, with $\F=\F_T$. Since $S\in\M(\PP)$, NFLVR trivially holds with respect to $\PP$, in view of part \emph{(v)} of Proposition \ref{arb-char}. We then define a probability measure $\QQ\ll\PP$ by $d\QQ/d\PP:=S_T=W_{\tau}$. Clearly, $\PP\ll\QQ$ does not hold, since $\PP(S_T=0)>0$, so that $\QQ$ and $\PP$ fail to be equivalent. Note also that $S$ represents the density process of $\QQ$ with respect to $\PP$.

\begin{proposition}	\label{example}
In the context of the present section, the process $S$ allows for arbitrage opportunities with respect to the probability measure $\QQ$ in the filtration $\FF$.
\end{proposition}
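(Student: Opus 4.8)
The plan is to deduce the failure of NA with respect to $\QQ$ from the \emph{combination} of two facts: the condition NA1 survives the passage from $\PP$ to $\QQ$, by Theorem~\ref{main}(iii), whereas the condition NFLVR does not; since part~(iv) of Proposition~\ref{arb-char} gives NFLVR $\Leftrightarrow$ NA1 $\wedge$ NA, the loss of NFLVR under $\QQ$ can then only be imputed to the loss of NA. In particular, no explicit arbitrage strategy needs to be displayed: it suffices to verify that NA1 holds and NFLVR fails with respect to $\QQ$.

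First I would record two preliminary observations. By Lemma~\ref{can-dec}, $S$ is still a continuous semimartingale under $\QQ$, so the characterisations of Proposition~\ref{arb-char} are available with $\QQ$ in the role of the reference measure. Next, since $S=W^\tau\in\M(\PP)$, part~(v) of Proposition~\ref{arb-char} shows that NFLVR, and hence NA1, holds with respect to $\PP$; as $\QQ\ll\PP$, Theorem~\ref{main}(iii) then yields that NA1 holds with respect to $\QQ$ as well. Finally, I would observe that $\tau=T$ holds $\QQ$-a.s.: indeed, $W_\tau=0$ on $\{\tau<T\}$, so $\QQ(\tau<T)=\EE_\PP\bigl[S_T\,\ind_{\{\tau<T\}}\bigr]=\EE_\PP\bigl[W_\tau\,\ind_{\{\tau<T\}}\bigr]=0$; since the quadratic variation of a continuous semimartingale is invariant under absolutely continuous changes of measure, it follows that $\langle S,S\rangle_t=t\wedge\tau=t$ for every $t\in[0,T]$, $\QQ$-a.s.

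The main step is to show that NFLVR fails under $\QQ$. Arguing by contradiction, assume NFLVR holds with respect to $\QQ$; by part~(v) of Proposition~\ref{arb-char} there then exists a probability measure $\QQ'\sim\QQ$ such that $S\in\Mloc^c(\QQ')$. Writing $S=S_0+\barM'$ with $\barM'\in\Mloc^c(\QQ')$ and $\barM'_0=0$, the invariance of the quadratic variation under the equivalent change of measure from $\QQ$ to $\QQ'$ gives $\langle\barM',\barM'\rangle_t=\langle S,S\rangle_t=t$, so that L\'evy's characterisation shows that $\barM'$, and hence $S$, is a $\QQ'$-Brownian motion starting from $S_0=1$. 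But then $\QQ'\bigl(\exists\,t\in[0,T]:S_t=0\bigr)>0$, contradicting both the property $\QQ\bigl(S_t>0\text{ for all }t\in[0,T]\bigr)=1$ of the density process $Z=S$ and the equivalence $\QQ'\sim\QQ$. Hence NFLVR fails under $\QQ$, and since NA1 holds under $\QQ$, part~(iv) of Proposition~\ref{arb-char} forces NA to fail; equivalently, there exists an arbitrage opportunity in the sense of Definition~\ref{arb}(iv). I expect the only delicate point to be the contradiction step, where one must ensure that $\langle S,S\rangle_t$ equals $t$ on the whole of $[0,T]$ under $\QQ$ (and not merely up to $\tau$), which is precisely where the identity $\tau=T$ $\QQ$-a.s. is used. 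As an alternative, more constructive route, one could build an admissible arbitrage strategy directly from the reciprocal $1/S$, which under $\QQ$ is a strictly positive \emph{strict} local martingale, in the spirit of \citet{DS95a}; this is more explicit but requires a careful localisation to preserve admissibility.
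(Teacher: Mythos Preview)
Your argument is correct and takes a genuinely different route from the paper's. The paper proceeds constructively: it uses the Girsanov--Lenglart theorem to write $S$ under $\QQ$ as $dS_t=S_t^{-1}dt+dN_t$ with $N$ a $\QQ$-Brownian motion (via the same identification $\tau=T$ $\QQ$-a.s.\ that you use), recognises $S$ as a three-dimensional Bessel process under $\QQ$, and then invokes the explicit arbitrage result of \citet{DS95a} for Bessel processes; a final step transfers the arbitrage strategy from the $\QQ$-completed filtration $\GG$ back to $\FF$. Your approach, by contrast, is indirect and entirely self-contained within the paper: you exploit Theorem~\ref{main}(iii) to retain NA1 under $\QQ$, and then rule out NFLVR by observing that any equivalent local martingale measure $\QQ'\sim\QQ$ would turn $S$ into a Brownian motion on $[0,T]$ via L\'evy's characterisation (since $\langle S\rangle_t=t$ $\QQ$-a.s.), which is incompatible with $\QQ'(S_t>0\text{ for all }t)=1$. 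The decomposition NFLVR $\Leftrightarrow$ NA1 $\wedge$ NA then forces the failure of NA.

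What each buys: the paper's route yields more structural information (the Bessel dynamics and, through \citet{DS95a}, an explicit arbitrage), at the cost of relying on an external result and having to handle the passage between $\GG$ and $\FF$. Your route avoids both of these and, rather elegantly, turns Proposition~\ref{example} into an illustration of Theorem~\ref{main} itself; the price is that no explicit strategy is produced. One minor remark: in your contradiction step the positivity $\QQ'(\exists\,t\in[0,T]:S_t=0)>0$ is the standard hitting-time property of Brownian motion started from $1$, and the complementary fact $\QQ(S_t>0\text{ for all }t)=1$ is exactly the statement recorded at the beginning of Section~\ref{S3} for density processes; both are unproblematic, so your only ``delicate point''—the use of $\tau=T$ $\QQ$-a.s.\ to obtain $\langle S\rangle_t=t$ on all of $[0,T]$—is correctly identified and handled.
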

\begin{proof}
By Theorem 2 of \citet{Len}, the process $N:=S-\int\!S^{-1}d\langle S\rangle$ belongs to $\Mloc^c(\QQ)$. Observe that $\langle N\rangle_t=\langle S\rangle_t=t\wedge\tau$ for all $t\in\left[0,T\right]$. Noting that $\QQ\left(S_T=0\right)=\EE_{\PP}\left[\ind_{\left\{S_T=0\right\}}S_T\right]=0$, it holds that $\tau=T$ $\QQ$-a.s. and hence $\langle N\rangle_t=t$ $\QQ$-a.s. for all $t\in\left[0,T\right]$. L\'evy's characterisation theorem then implies that $N$ is a $\QQ$-Brownian motion starting at $N_0=1$. 
Let us now denote by $\GG$ the $\QQ$-augmented natural filtration of $N$ (or, equivalently, of $S$), which coincides with $\FF$ augmented by the subsets of $\left\{S_T=0\right\}$. The process $S$ satisfies $dS_t=S_t^{-1}dt+dN_t$. Hence, with respect to the measure $\QQ$ and the filtration $\GG$, the process $S$ is a three-dimensional Bessel process (see e.g. \citet{RY}, Section XI.1) and the corollary on page 361 of \citet{DS95a} implies that $S$ admits arbitrage opportunities with respect to $\QQ$ in the filtration $\GG$.
This means that there exists a $\GG$-predictable admissible strategy $H\in L(S;\QQ)$ such that $G_T(H)\geq 0$ $\QQ$-a.s. and $\QQ\bigl(G_T(H)>0\bigr)>0$. As remarked on page 360 of \citet{DS95a}, there also exists an $\FF$-predictable process $K$ that is $\QQ$-indistinguishable from $H$, so that $K\cdot S\equiv H\cdot S$, where both stochastic integrals are considered with respect to $(\QQ,\GG)$. But then, since $S$ is $\FF$-adapted, Proposition III.6.25 of \citet{JS} shows that the stochastic integral $K\cdot S$ viewed with respect to $(\QQ,\GG)$ coincides with the stochastic integral viewed with respect to $(\QQ,\FF)$. We have thus proved that $K$ realises an arbitrage opportunity with respect to $\QQ$ in the filtration $\FF$.
\end{proof}

Actually, the counterexample given in Proposition \ref{example} can be generalised to a whole class of models for which NA (and, hence, NFLVR as well) is destroyed by an absolutely continuous, but not equivalent, change of measure. The following result corresponds essentially to Theorem 3 of \citet{DS95a} (see also Proposition 2.8 of \citet{OR} as well as Theorem 1 of \citet{RR} for an extension to incomplete markets).

\begin{proposition}	\label{example-gen}
Let $S=(S_t)_{0\leq t\leq T}$ be a non-negative real-valued process in $\M(\PP)$ with the predictable representation property with respect to $\PP$, with $S_0=1$ $\PP$-a.s. Define the probability measure $\QQ\ll\PP$ by $d\QQ/d\PP:=S_T$. If $\PP(S_T=0)>0$ then $S$ does not satisfy NA with respect to $\QQ$.
\end{proposition}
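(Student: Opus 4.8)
The plan is to mimic the argument in the proof of Proposition \ref{example}, reducing the general case to the specific Bessel-process situation via the predictable representation property. First I would use the Girsanov--Lenglart theory (Theorem 2 of \citet{Len}) to write $N:=S-\int\!S_-^{-1}d\langle S\rangle$ as an element of $\Mloc^c(\QQ)$; here continuity of $S$ is needed, so I would first note that $S$, having the predictable representation property with respect to a measure $\PP$ under which it is a martingale and with $S_0=1$, must in fact be a continuous local martingale driven by the martingale part of its own filtration --- more precisely, the PRP forces the filtration to be generated by $S$, and since $S\geq 0$ is a true martingale, the only way a nontrivial PRP can hold is for $S$ to be continuous (a jump would have to be representable, but the representing integrand times the jump cannot reproduce an arbitrary $\F_T$-measurable payoff unless the jump structure is trivial). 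With continuity in hand, $\langle N\rangle=\langle S\rangle$.

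Next, the key computation: since $d\QQ/d\PP=S_T$ and $\EE_\PP[\ind_{\{S_T=0\}}S_T]=0$, the process $S$ is $\QQ$-a.s. strictly positive on $[0,T]$ (using also $Z_->0$ $\QQ$-a.s. from the general density-process facts recalled in Section \ref{S3}), so $\int\!S_-^{-1}d\langle S\rangle$ is well-defined and finite $\QQ$-a.s., and $S$ solves $dS_t=S_t^{-1}d\langle N\rangle_t+dN_t$ under $\QQ$. The remaining point is to identify this $\QQ$-dynamics with that of a transient diffusion admitting arbitrage. In the Brownian case of Proposition \ref{example} one gets a three-dimensional Bessel process by L\'evy's theorem; in the general case $\langle N\rangle$ need not be $t$, but after a time change $\langle N\rangle$ reduces $N$ to a $\QQ$-Brownian motion and $S$ to a time-changed Bessel(3) process, which is still transient and still admits an arbitrage --- this is exactly the content of Theorem 3 of \citet{DS95a}, so I would invoke that result (or the corollary on page 361 of \citet{DS95a}) rather than reprove it. Finally, as in Proposition \ref{example}, I would pass from the strategy in the $\QQ$-augmented filtration of $N$ (equivalently of $S$) back to an $\FF$-predictable strategy that is $\QQ$-indistinguishable from it, using the invariance of the stochastic integral under the change of filtration (Proposition III.6.25 of \citet{JS}), since $S$ is $\FF$-adapted. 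This produces an $\FF$-predictable admissible $K$ with $G_T(K)\geq 0$ $\QQ$-a.s. and $\QQ(G_T(K)>0)>0$, contradicting NA under $\QQ$.

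The main obstacle, and the step requiring the most care, is the first one: justifying that the PRP of $S$ under $\PP$, together with $S\in\M(\PP)$ and $S_0=1$, forces $S$ to be continuous, so that the Lenglart theory applies in its continuous form and the reduction to a (time-changed) Bessel process goes through. One has to be careful that the PRP is stated for all $\F_T$-measurable integrable random variables and that it is preserved well enough after the change of measure to deduce completeness of the market --- it is really the completeness under $\QQ$ (i.e., every bounded $\F_T$-measurable payoff is replicable) that makes the Bessel-process arbitrage, a priori available only among $\GG$-strategies, realizable in $\FF$. Once continuity and this completeness-type transfer are secured, the rest is a direct citation of \citet{DS95a} combined with the filtration-change argument already used in Proposition \ref{example}.
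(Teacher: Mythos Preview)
The paper does not prove this proposition; it simply states that the result ``corresponds essentially to Theorem 3 of \citet{DS95a}'' (with further references to \citet{OR} and \citet{RR}). So your proposal ends up in the same place as the paper --- citing \citet{DS95a} --- but only after a detour that contains a genuine error.

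The gap is your first step: the claim that the predictable representation property of $S$ under $\PP$, together with $S\in\M(\PP)$ and $S_0=1$, forces $S$ to be continuous. This is false. A compensated Poisson process is a purely discontinuous martingale that enjoys the PRP in its natural filtration. More to the point, Example~\ref{ex-jump} in this very paper exhibits a non-negative martingale $S_t=\ind_{\{\xi>t\}}e^t$ with $S_0=1$ and the PRP (by Propositions 7.2.3.2 and 7.2.5.1 of \citet{JYC}) which has a single jump to zero at $\xi$. The heuristic you give --- that ``a jump would have to be representable, but the representing integrand times the jump cannot reproduce an arbitrary $\F_T$-measurable payoff unless the jump structure is trivial'' --- is simply wrong: representation with respect to a jump martingale is perfectly possible, and is in fact the content of the PRP for point-process filtrations. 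Consequently your reduction to the Lenglart continuous theory and to a (time-changed) Bessel process does not go through in the generality claimed by the proposition.

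If you are willing to \emph{assume} continuity of $S$ (which is arguably the intended reading, since Proposition~\ref{example-gen} sits in Section~\ref{S4} on continuous models and Theorem~3 of \citet{DS95a} is itself a statement about continuous local martingales), then the rest of your outline --- Girsanov--Lenglart, time change to a Bessel(3), the Delbaen--Schachermayer arbitrage, and the filtration transfer via Proposition III.6.25 of \citet{JS} --- is essentially a sketch of how \citet{DS95a} prove their Theorem~3, and is fine. But then you should state continuity as a hypothesis, not derive it from PRP.
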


\begin{remark}
We want to point out that the results of the present section do not actually need the completeness of the filtration. Indeed, the predictable representation property can be also established in right-continuous but not complete filtrations, see e.g. Theorem III.4.33 of \citet{JS}, while the failure of NA under $\QQ$ in the proof of Proposition \ref{example} can be proven as on page 59 of \citet{FR}.
\end{remark}

\section{Stability of NA1 in general semimartingale models}	\label{S5}

Among the different notions of no-arbitrage considered in Definition \ref{arb}, the NA1 condition is of particular interest. Indeed, as shown in Proposition 4.19 of \citet{KK} (see also \citet{CDM}), NA1 is the minimal condition that allows for a meaningful solution of portfolio optimisation problems in general semimartingale models. This section aims at investigating whether the NA1 condition is stable with respect to absolutely continuous changes of measure in financial models based on a \emph{general} (i.e., not necessarily continuous or locally bounded) $\R^d$-valued semimartingale $S=(S_t)_{0\leq t\leq T}$.\footnote{As already explained in Remark 4.13 in \citet{KK}, the NA1 condition may no longer be preserved by an absolutely continuous change of measure if jumps are present, unlike the continuous case considered so far.}

The following result, due to \citet{T}, characterises NA1 in a general semimartingale setting. 
We recall that an $\R^d$-valued semimartingale $S$ is called a \emph{$\sigma$-martingale} if there exists an increasing sequence $\{\Sigma_n\}_{n\in\N}$ of predictable sets with $\bigcup_{n\in\N}\Sigma_n=\Omega\times[0,T]$ and such that $\ind_{\Sigma_n}\cdot S^i\in\M(\PP)$, for all $n\in\N$ and $i=1,\ldots,d$ (see \citet{JS}, Section III.6e).
We denote by $\M_{\sigma}(\PP)$ the family of all $\sigma$-martingales with respect to $\PP$.

\begin{proposition}	\label{NA1}
NA1 holds if and only if there exists a \emph{strict martingale density (SMD)}, i.e., a strictly positive real-valued process $L=(L_t)_{0\leq t\leq T}$ belonging to $\mMloc(\PP)$ such that $\EE_{\PP}[L_0]<\infty$ and $LS\in\M_{\sigma}(\PP)$.
\end{proposition}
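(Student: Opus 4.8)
The plan is to reduce the statement to known equivalences in the literature rather than to prove everything from scratch, while still giving enough detail to be self-contained. Recall that, by Kardaras' work (and, as stated in the excerpt, by Takaoka), NA1 is equivalent to the existence of a strictly positive supermartingale deflator, i.e. a strictly positive process $Y$ with $Y_0=1$ such that $YV(x,H)$ is a $\PP$-supermartingale for every admissible strategy $H$. So the first thing I would do is recall this characterisation (for instance via \citet{KK}, Theorem 4.12, or \citet{Ka1}) and take it as the working definition of NA1 for the general semimartingale $S$.

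The heart of the argument is then the passage between supermartingale deflators and strict martingale densities. For the direction ``SMD $\Rightarrow$ NA1'': given a strictly positive $L\in\Mloc(\PP)$ with $LS\in\M_\sigma(\PP)$ and $\EE_\PP[L_0]<\infty$, I would normalise so that effectively $L_0=1$ (absorbing the constant), and show that $L$ itself is a supermartingale deflator. The key point is that if $LS\in\M_\sigma(\PP)$ then $L\,(S^i)$ and hence $L\cdot(H\cdot S) + (H\cdot S)_{-}\cdot L$-type integration-by-parts computations show that $L\,V(x,H)$ is a $\sigma$-martingale bounded below (by admissibility, since $V(x,H)\ge -a$ and $L>0$, the product is bounded below by $-a L$, which is only a local martingale bound, so one must be slightly careful) — the standard fact is that a $\sigma$-martingale that is bounded below by a martingale, or more precisely a $\sigma$-martingale whose negative part is dominated appropriately, is a supermartingale; here one uses that $L$ is a nonnegative local martingale hence a supermartingale, and an Ansel--Stricker type argument (\citet{AS}) gives that $L\,V(x,H)$ is a supermartingale. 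This yields a supermartingale deflator and hence NA1.

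For the converse ``NA1 $\Rightarrow$ SMD'': starting from a strictly positive supermartingale deflator $Y$, I would use the multiplicative (Doob--Meyer / Itô) decomposition to write $Y = L\,D$ where $L$ is a strictly positive local martingale and $D$ is a predictable nonincreasing process with $D_0=1$; the candidate SMD is $L$. One then has to verify $LS\in\M_\sigma(\PP)$: since $YS$ is a supermartingale and $Y=LD$ with $D$ nonincreasing predictable, a localization/change-of-variables argument (using that $1/D$ is of finite variation and predictable, so integrating it against the special semimartingale $YS$ and splitting into local-martingale and finite-variation parts) shows $LS$ has vanishing drift, i.e. is a $\sigma$-martingale; the condition $\EE_\PP[L_0]<\infty$ is automatic since $L_0=1$ after normalisation, or more generally one absorbs the initial value.

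The main obstacle I expect is the careful handling of integrability in the general (possibly unbounded, jumping) semimartingale setting: one genuinely needs the $\sigma$-martingale notion rather than the local-martingale notion precisely because $LS$ may fail to be a local martingale when $S$ is not locally bounded, and the integration-by-parts manipulations that relate ``$L$ deflates all $V(x,H)$'' to ``$LS$ is a $\sigma$-martingale'' require the $\sigma$-localizing sequence $\{\Sigma_n\}$ and an Ansel--Stricker argument to control the negative parts and conclude the supermartingale property from the $\sigma$-martingale property plus a lower bound. I would therefore structure the proof by first citing the supermartingale-deflator characterisation of NA1, then proving the two implications via the multiplicative decomposition and integration by parts, flagging the Ansel--Stricker step as the technical core, and referring to \citet{KK}, \citet{Ka1} and \citet{T} for the parts that are already established there.
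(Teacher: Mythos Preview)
The paper does not prove this proposition at all; it simply attributes the result to \citet{T} and states it without argument. So there is no ``paper's own proof'' to compare against beyond the citation itself.

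That said, your sketch is worth assessing on its merits. The direction SMD $\Rightarrow$ NA1 is essentially correct: the integration-by-parts identity
\[
L(H\!\cdot\!S) \;=\; H\!\cdot\!(LS) \;+\; \bigl((H\!\cdot\!S)_- - H^{\top}S_-\bigr)\!\cdot\!L
\]
shows $L\,V(x,H)$ is a $\sigma$-martingale, and since $V(x,H)\ge -a$ it is bounded below by $-aL$, a nonnegative local martingale; the Ansel--Stricker argument then yields the supermartingale property, so $L$ is a supermartingale deflator and NA1 follows.

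The converse, however, has a genuine gap. You write ``since $YS$ is a supermartingale'', but a supermartingale deflator $Y$ only makes $Y\,V(x,H)$ a supermartingale for \emph{admissible} wealth processes, and $S$ itself is not of this form when it is unbounded from below. Even if you repair this and obtain from $Y=LD$ a strictly positive local martingale $L$ that deflates all admissible wealth processes (a local martingale deflator in the sense of \citet{Ka2}), you have not shown that $LS\in\M_\sigma(\PP)$: dividing a supermartingale by a predictable nonincreasing $D$ produces extra finite-variation terms, not a $\sigma$-martingale. Upgrading a local-martingale deflator to a strict martingale density for a possibly unbounded $S$ is precisely the non-trivial content of Takaoka's paper, and his argument goes through a change of num\'eraire and the Delbaen--Schachermayer FTAP rather than a multiplicative Doob--Meyer decomposition. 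Your proposed route does not close this gap.
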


Let now $\QQ$ be a probability measure on $(\Omega,\F)$ with $\QQ\ll\PP$. Until the end of this section, we denote by $Z=(Z_t)_{0\leq t \leq T}$ the density process of $\QQ$ with respect to $\PP$ and  define the following stopping times:
\begin{equation}	\label{st}	\begin{aligned}
\tau &:= \inf\{t\in[0,T]:Z_{t-}=0\text{ or }Z_t=0\}\,,\\
\tau_n &:= \inf\{t\in[0,T]:Z_t<1/n\}\wedge T\,, \qquad\text{for all }n\in\N.
\end{aligned}	\end{equation}
A key insight, first exploited by \citet{Fol} and \citet{Mey} and more recently by \citet{CFR} and \citet{KKN} among others, consists in looking at the process $1/Z$ under the measure $\QQ$.

\begin{lemma}	\label{properties-Z}
Let $\QQ$ be a probability measure on $(\Omega,\F)$ such that $\QQ\ll\PP$. 
Then the following hold:
\begin{itemize}
\item[(i)] the process $1/Z$ is a strictly positive $\QQ$-supermartingale;
\item[(ii)] for every $M=(M_t)_{0\leq t \leq T}\in\M(\PP)$ the process $M/Z$ belongs to $\mMloc(\QQ)$ if and only if $\PP(\tau>\tau_n)=1$, for all $n\in\N$.
\end{itemize}
\end{lemma}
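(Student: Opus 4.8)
The plan is to prove the two assertions essentially independently, using the Girsanov--Lenglart machinery for absolutely continuous changes of measure and the elementary observation that on the stochastic interval $\lsir 0,\tau\rsir$ the processes $Z$ and $Z_-$ are strictly positive, while $\QQ(\tau\leq T)=0$ because $\QQ\ll\PP$ and $\{Z_t>0\text{ and }Z_{t-}>0\text{ for all }t\in[0,T]\}$ has full $\QQ$-measure.

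For part (i): $1/Z$ is well defined and strictly positive $\QQ$-a.s. since $\QQ$ lives on the set where $Z$ and $Z_-$ never vanish. To see it is a $\QQ$-supermartingale, first localise so that $Z\geq 1/n$ on $\lsir 0,\tau_n\rsir$; on this interval $1/Z$ is a bounded process and, by It\^o's formula (or the Girsanov--Lenglart formula of \citet{Len}), $1/Z^{\tau_n}$ is a $\QQ$-local martingale up to a nonincreasing finite-variation part coming from the jumps of $Z$ — more precisely, the Bayes/Lenglart rule gives $\EE_{\QQ}[1/Z_t\mid\F_s]=\EE_{\PP}[1\cdot(1/Z_s)\mid\ldots]$ type identities; the cleanest route is to note that for any bounded stopping time $\rho$, $\EE_{\QQ}[1/Z_\rho] = \EE_{\PP}[Z_T\cdot\ind_{\{Z_\rho>0\}}/Z_\rho]\leq \EE_{\PP}[\EE_{\PP}[Z_T\mid\F_\rho]/Z_\rho\ind_{\{Z_\rho>0\}}] = \EE_{\PP}[\ind_{\{Z_\rho>0\}}]\leq 1$, and the same computation conditionally on $\F_s$ gives the supermartingale inequality. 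So $1/Z$ is a nonnegative $\QQ$-supermartingale, strictly positive because $Z_-$ stays positive under $\QQ$; positivity of the left limits then follows from the minimum principle for nonnegative supermartingales.

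For part (ii): let $M\in\M(\PP)$. The candidate $\QQ$-local martingale property of $M/Z$ is exactly the content of the Girsanov--Lenglart theorem, \emph{provided} one can localise $M/Z$ by stopping times that are $\QQ$-a.s. finite and reduce it. The natural localising sequence is $(\tau_n)$: on $\lsir 0,\tau_n\rsir$ the process $1/Z$ is bounded by $n$, and $M^{\tau_n}$ is a $\PP$-martingale, so $M^{\tau_n}/Z^{\tau_n}$ is a genuine $\QQ$-local martingale by Lenglart's theorem (Theorem 2 of \citet{Len}). The issue is whether $\tau_n\nearrow T$ $\QQ$-a.s., i.e.\ whether $\tau_n$ eventually exceeds every fixed time. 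Since $Z_-$ is $\QQ$-a.s.\ strictly positive on all of $[0,T]$, $\inf_{t\leq T}Z_{t-}>0$ $\QQ$-a.s.\ would be false only if $Z$ jumps to zero; precisely, $\{\sup_n\tau_n<T\}\subseteq\{\tau\leq T\}$ up to the possibility that $Z$ approaches zero continuously without hitting it — and this is exactly where the condition $\PP(\tau>\tau_n)=1$ for all $n$ enters. If $\PP(\tau>\tau_n)=1$ for all $n$, then $Z$ cannot jump to zero nor be killed at the end, $\tau_n$ increases to $T$ $\PP$-a.s.\ hence $\QQ$-a.s., and $(\tau_n)$ is a bona fide $\QQ$-localising sequence, giving $M/Z\in\mMloc(\QQ)$. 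Conversely, if $\PP(\tau\leq\tau_{n_0})>0$ for some $n_0$, one picks $M$ to be (a bounded martingale closely related to) $Z$ itself — say $M_t=\EE_{\PP}[Z_T\wedge c\mid\F_t]$ for a constant $c$, or more simply exploit that $Z\in\M(\PP)$ and $Z/Z=\ind_{\{Z>0\}}$ is \emph{not} a $\QQ$-local martingale because it is a nonincreasing adapted process that genuinely decreases with positive $\PP$-probability on the set $\{\tau\leq T\}$ — and check that it admits no $\QQ$-localising sequence, since any such sequence would have to avoid a set of positive $\QQ$-measure; care is needed here to produce an $M$ in $\M(\PP)$ (not merely $\Mloc$) witnessing the failure, and to translate "$\PP(\tau\leq\tau_{n_0})>0$" into "positive $\QQ$-probability of bad behaviour" — but note $\tau_{n_0}<\tau$ on a $\PP$-positive set forces $Z$ to drop below $1/n_0$ strictly before $Z$ or $Z_-$ hits $0$, which is impossible for a nonnegative supermartingale unless it later recovers, and a martingale that dips low and recovers is fine under $\PP$ but its reciprocal under $\QQ$ picks up the finite-variation drift from the recovery, breaking the local martingale property.

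The main obstacle is the converse direction of part (ii): turning the qualitative statement "$Z$ can get small without $\tau$ happening" into an explicit $M\in\M(\PP)$ for which $M/Z\notin\mMloc(\QQ)$, and in particular ruling out \emph{all} $\QQ$-localising sequences rather than just the canonical one. I expect this to require a careful argument that any $\QQ$-localising sequence for $M/Z$ can be dominated by the $(\tau_n)$ up to $\QQ$-null sets — essentially because $1/Z$ is the $\QQ$-supermartingale from part (i) and a $\QQ$-local martingale dominated below by $-\|M\|_\infty/Z$... actually cleaner: reduce to showing $1/Z$ itself fails to be a $\QQ$-local martingale exactly when $\PP(\tau\leq\tau_n)>0$ for some $n$, by computing its $\QQ$-Doob--Meyer drift and showing it is nonzero, then bootstrap to general $M$ by choosing $M$ so that $\langle M,Z\rangle$ sees this drift. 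The forward direction and part (i) are routine given Lenglart's theorem; it is the sharpness of the threshold condition $\PP(\tau>\tau_n)=1$ that carries the real content.
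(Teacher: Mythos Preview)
Your handling of part (i) and of the sufficiency direction in part (ii) is essentially correct and matches the spirit of the paper (which simply cites Theorem~2.1 and Proposition~2.3 of \citet{CFR}). One minor slip: under the hypothesis $\PP(\tau>\tau_n)=1$ you write ``$\tau_n$ increases to $T$ $\PP$-a.s.\ hence $\QQ$-a.s.'' --- this is false under $\PP$ (on $\{\tau\leq T\}$ the sequence $\tau_n$ converges to $\tau$, not $T$), but the conclusion $\tau_n\nearrow T$ $\QQ$-a.s.\ is what you need and follows directly from $\QQ(\tau\leq T)=0$.

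The genuine gap is in the necessity direction of (ii). Your first candidate witness, $M=Z$, does not work: $Z/Z=\ind_{\{Z>0\}}$ equals $1$ identically under $\QQ$, so it \emph{is} a $\QQ$-local martingale, and your argument that it ``genuinely decreases with positive $\PP$-probability'' is irrelevant since local martingality is tested under $\QQ$. You then drift toward the correct idea --- look at $1/Z$ --- but frame it as a reduction to be followed by a ``bootstrap to general $M$''. This is backwards. The statement reads ``[for every $M\in\M(\PP)$, $M/Z\in\Mloc(\QQ)$] iff [$\PP(\tau>\tau_n)=1$ for all $n$]'', so for necessity you need only a \emph{single} $M$ witnessing failure. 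The paper's choice is simply $M\equiv 1\in\M(\PP)$, which gives $M/Z=1/Z$; the entire necessity direction then collapses to the assertion that $1/Z\in\Mloc(\QQ)$ forces $\PP(\tau>\tau_n)=1$ for all $n$, which is exactly Theorem~2.1 of \citet{CFR}. No bootstrap is needed, and your programme of ``choosing $M$ so that $\langle M,Z\rangle$ sees this drift'' is unnecessary. If you want a self-contained argument rather than a citation, the cleanest route is to compute $\EE_{\QQ}[1/Z_{\tau_n}]=\PP(\tau>\tau_n)$ directly (Bayes' rule plus optional sampling for the $\PP$-martingale $Z$), and observe that a $\QQ$-local martingale starting at $1/Z_0$ with $\EE_{\QQ}[1/Z_0]=1$ and localising sequence dominating $(\tau_n)$ would force $\EE_{\QQ}[1/Z_{\tau_n}]=1$ for all $n$.
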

\begin{proof}
The first claim follows from simple computations (see also \citet{CFR}, Theorem 2.1). In order to prove the second claim, suppose that $\PP(\tau>\tau_n)=1$, for all $n\in\N$. Then, for every $M\in\M(\PP)$, we have $M/Z\in\Mloc(\QQ)$, see e.g. part \emph{(iii)} of Proposition 2.3 of \citet{CFR}. Conversely, taking $M\equiv 1$, if $1/Z\in\Mloc(\QQ)$, Theorem 2.1 of \citet{CFR} implies that $\PP(\tau>\tau_n)=1$ for all $n\in\N$.
\end{proof}

The next theorem is the main result of this section and shows that the NA1 condition is stable with respect to an absolutely continuous change of measure if the corresponding density process $Z$ does not jump to zero. Note also that the proof is constructive, in the sense that it exhibits an explicit SMD for $S$ with respect to the measure $\QQ$.

\begin{theorem}	\label{main-2}
Let $\QQ$ be a probability measure on $(\Omega,\F)$ such that $\QQ\ll\PP$.
If NA1 holds with respect to $\PP$ and $\PP(\tau>\tau_n)=1$, for all $n\in\N$, then NA1 holds with respect to $\QQ$.
\end{theorem}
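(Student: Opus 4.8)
The plan is to exploit the strict martingale density (SMD) characterisation of NA1 furnished by Proposition \ref{NA1}, in both directions. Since NA1 holds with respect to $\PP$, there is an SMD $L$ for $S$ under $\PP$: a strictly positive $L\in\Mloc(\PP)$ with $\EE_\PP[L_0]<\infty$ and $LS\in\M_{\sigma}(\PP)$. The candidate SMD under $\QQ$ is $L':=L/Z$, which is $\QQ$-a.s.\ well defined and strictly positive because $Z$ and $Z_-$ are $\QQ$-a.s.\ strictly positive. I would then verify the three defining properties of an SMD for $S$ under $\QQ$ and conclude by applying Proposition \ref{NA1} under $\QQ$.

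The integrability at time $0$ follows from Bayes' rule, $\EE_\QQ[L'_0]=\EE_\PP\bigl[L_0\,\ind_{\{Z_0>0\}}\bigr]\leq\EE_\PP[L_0]<\infty$. For $L'\in\Mloc(\QQ)$, I would pick a localising sequence $\{\rho_k\}$ with $\rho_k\uparrow T$ and $L^{\rho_k}\in\M(\PP)$; the hypothesis $\PP(\tau>\tau_n)=1$ for all $n$ and Lemma \ref{properties-Z}(ii) then give $L^{\rho_k}/Z\in\Mloc(\QQ)$, and since $(L/Z)^{\rho_k}=(L^{\rho_k}/Z)^{\rho_k}\in\Mloc(\QQ)$ for every $k$, with $\rho_k\uparrow T$ also $\QQ$-a.s., one obtains $L'\in\Mloc(\QQ)$. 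The same stopping argument upgrades Lemma \ref{properties-Z}(ii) to: $N\in\Mloc(\PP)$ implies $N/Z\in\Mloc(\QQ)$; in particular $\zeta:=1/Z$ is then a strictly positive $\QQ$-local martingale.

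The key step — and the one I expect to be the main obstacle — is to show $L'S=(LS)/Z\in\M_{\sigma}(\QQ)$, the identity being valid $\QQ$-a.s.\ because $Z>0$ $\QQ$-a.s.; this can be checked component by component. Fixing $i$ and invoking the structure theory of $\sigma$-martingales, write $LS^i-(LS^i)_0=\phi\cdot N$ with $N\in\M(\PP)$ and $\phi$ predictable and $N$-integrable (for instance $N=H\cdot(LS^i)$ for a suitable $(0,1]$-valued predictable $H$ and $\phi=1/H$), so that $\bar{N}:=N/Z\in\Mloc(\QQ)$ by the previous paragraph. Integration by parts under $\QQ$, together with $d(LS^i)=\phi\,dN$, yields
\[
(LS^i)\,\zeta=(LS^i)_0\,\zeta_0+(LS^i)_-\cdot\zeta+\phi\cdot Q,\qquad Q:=\zeta_-\cdot N+[N,\zeta].
\]
Here $(LS^i)_-\cdot\zeta\in\Mloc(\QQ)$, since $(LS^i)_-$ is locally bounded and $\zeta\in\Mloc(\QQ)$, while a second integration by parts applied to the product $N\zeta=\bar{N}$ gives $Q=\bar{N}-N_0\zeta_0-N_-\cdot\zeta\in\Mloc(\QQ)$. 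Hence $\phi\cdot Q$ is a stochastic integral with respect to a $\QQ$-local martingale, so it lies in $\M_{\sigma}(\QQ)$; consequently $(LS^i)\,\zeta=(LS^i)/Z\in\M_{\sigma}(\QQ)$, and collecting the components gives $L'S\in\M_{\sigma}(\QQ)$. Thus $L'$ is an SMD for $S$ under $\QQ$, and Proposition \ref{NA1} applied under $\QQ$ gives NA1 with respect to $\QQ$.

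The reason I expect this last step to be the real difficulty is that, under a merely absolutely continuous change of measure, the $\sigma$-martingale property cannot be transported in one line: $N$ itself need not be a $\QQ$-local martingale (only $N/Z$ is), and $1/Z$ is in general only a $\QQ$-supermartingale, becoming a $\QQ$-local martingale exactly because the density process does not jump to zero — which is the content of the assumption $\PP(\tau>\tau_n)=1$ for all $n$. The integration-by-parts identity above is the device that reconciles these facts; it also hides a couple of routine integrability checks, namely that $\zeta_-\phi$ is $N$-integrable and $\phi$ is $Q$-integrable under $\QQ$, which follow from local boundedness of $\zeta_-$ and $(LS^i)_-$ and from the Kunita--Watanabe inequality.
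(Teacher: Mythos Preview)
Your proof is correct and follows the same strategy as the paper: take an SMD $L$ under $\PP$, propose $L/Z$ as the SMD under $\QQ$, and use Lemma \ref{properties-Z}(ii) (which is where the hypothesis $\PP(\tau>\tau_n)=1$ enters) to transport the martingale properties. The only difference lies in the verification of $LS/Z\in\M_\sigma(\QQ)$: the paper works with the predictable-set characterisation of $\sigma$-martingales, so that $\ind_{\Sigma_n}\cdot LS^i\in\M(\PP)$ yields $(\ind_{\Sigma_n}\cdot LS^i)/Z\in\Mloc(\QQ)$ directly and a one-line product rule gives $\ind_{\Sigma_n}\cdot(LS^i/Z)\in\Mloc(\QQ)$, whereas you use the integral representation $LS^i=(LS^i)_0+\phi\cdot N$ and a double integration by parts --- your route is equally valid but slightly longer, since the unbounded integrand $\phi$ forces the extra integrability checks that the bounded indicators $\ind_{\Sigma_n}$ sidestep.
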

\begin{proof}
By Proposition \ref{NA1}, there exists a real-valued strictly positive process $L\in\Mloc(\PP)$ such that $\EE_{\PP}[L_0]<\infty$ and $LS\in\M_{\sigma}(\PP)$. Let $\{\varrho_n\}_{n\in\N}$ be a $\PP$-localizing sequence for $L$. Then, due to part \emph{(ii)} of Lemma \ref{properties-Z}, it holds that $L^{\varrho_n}/Z\in\Mloc(\QQ)$. Since $\QQ\ll\PP$, this implies that $L/Z\in\Mloc(\QQ)$. 
By Definition III.6.33 of \citet{JS}, there exists an increasing sequence of predictable sets $\{\Sigma_n\}_{n\in\N}$ such that $\bigcup_{n\in\N}\Sigma_n=\Omega\times[0,T]$ and $\ind_{\Sigma_n}\cdot LS^i\in\M(\PP)$, for every $i=1,\ldots,d$. Again by part \emph{(ii)} of Lemma \ref{properties-Z}, we have $(\ind_{\Sigma_n}\cdot LS^i)/Z\in\Mloc(\QQ)$ and an application of the product rule leads to $\ind_{\Sigma_n}\cdot(LS^i/Z)\in\Mloc(\QQ)$, for every $i=1,\ldots,d$. Since $\QQ\ll\PP$ and the class $\M_{\sigma}(\QQ)$ is stable by localization (see e.g. \citet{JS}, Proposition III.6.34), this shows that $LS^i/Z\in\M_{\sigma}(\QQ)$, for all $i=1,\ldots,d$. 
Since we also have $\EE_{\QQ}[L_0/Z_0]\leq\EE_{\PP}[L_0]<\infty$, we have thus shown that $L/Z$ is an SMD under $\QQ$. Proposition \ref{NA1} then implies that NA1 holds with respect to $\QQ$.
\end{proof}

In general, the condition $\PP(\tau>\tau_n)=1$, for all $n\in\N$, cannot be weakened, as shown in the following result (see also Example \ref{ex-jump} and Proposition \ref{counter} below).

\begin{proposition}	\label{CE}
Let $S=(S_t)_{0\leq t \leq T}$ be a non-negative real-valued process in $\M(\PP)$ with the predictable representation property with respect to $\PP$, with $S_0=1$ $\PP$-a.s. Define the probability measure $\QQ\ll\PP$ by $d\QQ/d\PP:=S_T$ and $\tau:=\inf\{t\in[0,T]:S_{t-}=0\text{ or }S_t=0\}$. 
Then, if $\PP\bigl(\{\tau\leq T\}\cap\{S_{\tau-}>0\}\bigr)>0$, the process $S$ does not satisfy NA1 with respect to $\QQ$.
\end{proposition}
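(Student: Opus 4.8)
Here is how I would attack the statement.

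The plan is to argue by contradiction, using the strict‑martingale‑density characterisation of NA1 (Proposition~\ref{NA1}) together with Lemma~\ref{properties-Z}. First I would record that, since $S\in\M(\PP)$ with $S_0=1$, the density process of $\QQ$ with respect to $\PP$ is $Z=S$, so that the $\tau$ and $\tau_n$ of the statement agree with those of \eqref{st}. The hypothesis $\PP\bigl(\{\tau\leq T\}\cap\{S_{\tau-}>0\}\bigr)>0$ says precisely that $Z$ jumps to zero on a set of positive $\PP$‑measure; on that set $Z_\tau=0<1/n$ forces $\tau_n\leq\tau$, and since $\inf_{s\in[0,\tau)}Z_s>0$ there (a path property of a c\`adl\`ag function having a strictly positive left limit at $\tau$), for $n$ large one gets $\PP(\tau_n=\tau)>0$, i.e.\ $\PP(\tau>\tau_n)<1$. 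By Lemma~\ref{properties-Z}(ii) applied with $M\equiv 1$ this is exactly $1/Z\notin\Mloc(\QQ)$. Hence, writing the Doob--Meyer decomposition of the $\QQ$‑supermartingale $1/S$ $(=1/Z)$ from Lemma~\ref{properties-Z}(i) as $1/S=1+N-\bar A$, with $N\in\Mloc(\QQ)$, $N_0=0$, and $\bar A$ predictable non‑decreasing, $\bar A_0=0$, we have $\QQ(\bar A_T>0)>0$.

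Now suppose, for contradiction, that NA1 holds with respect to $\QQ$. By Proposition~\ref{NA1} there is a strict martingale density $L$ under $\QQ$: a strictly positive $L\in\Mloc(\QQ)$ with $\EE_{\QQ}[L_0]<\infty$ and $LS\in\M_{\sigma}(\QQ)$; since $LS$ ($=LZ$) is strictly positive, it is in fact a $\QQ$‑local martingale. I would then write $L=(LS)\cdot(1/S)$ and take predictable finite‑variation parts. As $S$ is quasi‑left‑continuous (see below), so is $1/S$, hence $\bar A$ is continuous and $[LS,1/S]=[LS,N]$; applying the product rule to $L=(LS)(1/S)$ and using that $L$ is a $\QQ$‑local martingale forces the finite‑variation part of $(LS)_-\,dN-(LS)_-\,d\bar A+(1/S)_-\,d(LS)+d[LS,N]$ to vanish, i.e.\ (after a localisation making the relevant brackets well defined)
\[
(LS)_-\,d\bar A \;=\; d\langle LS,N\rangle^{\QQ}.
\]
Since $\bar A$ is non‑decreasing and $(LS)_->0$, the left‑hand side is a non‑trivial non‑decreasing process; by the Kunita--Watanabe inequality $\langle LS,N\rangle^{\QQ}\ll\langle N\rangle^{\QQ}$, and therefore $\bar A\ll\langle N\rangle^{\QQ}$: the process $\bar A$ can increase only where $\langle N\rangle^{\QQ}$ increases.

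The contradiction comes from the structural consequences of the predictable representation property of $S$. First, the jump of $S$ to zero cannot occur at a predictable time on $\{S_{\tau-}>0\}$ (this would violate $\EE_\PP[\Delta S_\tau\mid\F_{\tau-}]=0$), so $S$ is quasi‑left‑continuous, which justifies the step above. Second, PRP forces $S$ to be either a continuous local martingale --- impossible here, since then $S$ would reach $0$ only continuously and $S_{\tau-}=0$ on $\{\tau\leq T\}$, against the hypothesis --- or a quasi‑left‑continuous, purely discontinuous one whose jump‑to‑zero mechanism is carried by the same one‑dimensional noise as its other jumps; in particular, at the times where a jump to zero is possible, $S$ can carry no independent continuous‑martingale component and no other jump. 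Under $\QQ$ the jump to zero is $\QQ$‑a.s.\ absent, so on the (positive‑$\QQ$‑probability) set of crash‑risk times --- which is exactly where the Girsanov--Lenglart compensator $\bar A$ grows --- the process $S$ has, under $\QQ$, neither continuous‑martingale activity nor jumps, hence $N$ (essentially the martingale part of $1/S$) and $\langle N\rangle^{\QQ}$ do not grow there. This contradicts $\bar A\ll\langle N\rangle^{\QQ}$. Therefore no strict martingale density exists under $\QQ$ and, by Proposition~\ref{NA1}, NA1 fails with respect to $\QQ$. The main obstacle is this last point: making rigorous the assertion that the drift produced in $S$ by the jump of the density process $Z=S$ to zero lives transversally to the $\QQ$‑quadratic/jump structure of $S$. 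This is where the PRP hypothesis is indispensable, and it is also what distinguishes the present situation from a purely continuous descent of $Z$ to zero, which by Theorem~\ref{main-2} (and the continuous counterexample of Section~\ref{S4}) does not destroy NA1.
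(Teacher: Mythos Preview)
Your overall strategy --- argue by contradiction, take a strict martingale density $L$ under $\QQ$, and reach a clash with $1/S\notin\Mloc(\QQ)$ --- is the same as the paper's. The execution, however, diverges, and the route you choose has a real gap that you yourself flag at the end.

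The paper does not attempt any Doob--Meyer/product-rule analysis of $1/S$. Instead, it transfers $L$ and $LS$ back to $\PP$ via Lenglart's lemma (finding stopping times $\varrho_n$ with $(LS)^{\varrho_n},(LS^2)^{\varrho_n}\in\Mloc(\PP)$), and then invokes the \emph{extremality} form of the predictable representation property (Jacod, \emph{Calcul Stochastique}, Theorem~11.2): if $S\in\M(\PP)$ has PRP and $N\geq 0$ is a $\PP$-local martingale with $NS\in\Mloc(\PP)$, then $N$ is $\PP$-a.s.\ constant. Applied to $N=(LS)^{\varrho_n}$, this forces $LS$ to be trivial, hence $L=L_0/S$ $\QQ$-a.s. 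Since $L\in\Mloc(\QQ)$, this would mean $1/S\in\Mloc(\QQ)$, contradicting Lemma~\ref{properties-Z}(ii). The PRP hypothesis is used once, in exactly this extremality statement; no structural dichotomy for $S$ (continuous vs.\ purely discontinuous) is needed.

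In your argument, two steps are not justified. First, the inference ``the jump to zero is totally inaccessible, hence $S$ is quasi-left-continuous'' is a non sequitur: you have only ruled out predictable jumps \emph{to zero}, not predictable jumps of $S$ in general, so you cannot conclude that $\bar A$ is continuous. Second --- and this is the essential obstacle you identify --- the claim that $\bar A$ grows only where $\langle N\rangle^{\QQ}$ does not is not established, and the heuristic ``PRP forces $S$ to be continuous or purely discontinuous with a single jump mechanism, and under $\QQ$ the crash-risk times carry no martingale activity'' is neither a theorem nor obviously true: under $\QQ$ the jumps of $S$ \emph{other} than those to zero survive, and their compensator under $\QQ$ can be absolutely continuous with respect to the very clock that drives $\bar A$. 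Your relation $(LS)_-\,d\bar A=d\langle LS,N\rangle^{\QQ}$ is fine, but it does not by itself produce a contradiction without the missing transversality. The clean way to close the argument is precisely the extremality step above: it forces $L=1/S$ outright, and then Lemma~\ref{properties-Z}(ii) finishes immediately.
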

\begin{proof}
Arguing by contradiction, suppose that $S$ satisfies NA1 under $\QQ$. Then, by Proposition \ref{NA1}, there exists an SMD $L$ with respect to $\QQ$, so that $LS\in\Mloc(\QQ)$, since a non-negative $\sigma$-martingale is a local martingale (see e.g. \citet{JS}, Proposition III.6.35). Due to the lemma on page 67 of \citet{Len}, there exists a sequence of stopping times $\{\varrho_n\}_{n\in\N}$ with $\varrho_n\nearrow\infty$ $\QQ$-a.s. as $n\rightarrow\infty$ such that $(LS)^{\varrho_n}\in\Mloc(\PP)$ and $(LS^2)^{\varrho_n}\in\Mloc(\PP)$. In view of Theorem 11.2 of \citet{J}, since $S\in\M(\PP)$ enjoys the predictable representation property with respect to $\PP$, this implies that $(LS)^{\varrho_n}$ is $\PP$-a.s. trivial, for all $n\in\N$. Since $\QQ\ll\PP$ and $\varrho_n\nearrow\infty$ $\QQ$-a.s. as $n\rightarrow\infty$, this implies that $L=1/S$ $\QQ$-a.s. However, due to part \emph{(ii)} of Lemma \ref{properties-Z} (see also \citet{PR}, Example 4.1), the condition $\PP\bigl(\{\tau\leq T\}\cap\{S_{\tau-}>0\}\bigr)>0$ implies that $1/S$ is a $\QQ$-supermartingale which does not belong to $\Mloc(\QQ)$, thus contradicting the hypothesis that $L$ is an SMD with respect to $\QQ$.
\end{proof}

\begin{example}	\label{ex-jump}
We now present a simple example where the assumptions of Proposition \ref{CE} are satisfied (compare also with \citet{PR}, Example 4.2).
Let $(\Omega,\F,\PP)$ be a given probability space supporting a standard exponential random variable $\xi:\Omega\rightarrow\R_+$, so that $\PP(\xi>t)=e^{-t}$, for all $t\in\R_+$, and let $\FF$ be the right-continuous filtration generated by the process $(\ind_{\{\xi\leq t\}})_{0\leq t \leq T}$. Define then the process $S=(S_t)_{0\leq t\leq T}$ by $S_t:=\ind_{\{\xi>t\}}e^t$, for $t\in[0,T]$. As follows from Propositions 7.2.3.2 and 7.2.5.1 of \citet{JYC}, the process $S$ belongs to $\M(\PP)$ and enjoys the predictable representation property in $\FF$. Clearly, using the notation of Proposition \ref{CE}, it holds that $\tau=\xi$ and, moreover:
\[
\PP\bigl(\{\tau\leq T\}\cap\{S_{\tau-}>0\}\bigr)
= \PP(\xi\leq T) = 1-e^{-T}>0.
\]
Proposition \ref{CE} then shows that $S$ fails to satisfy NA1 under $\QQ$. Indeed, since $\QQ(\xi>T)=1$, the process $S$ is perceived as a strictly increasing process under $\QQ$ and, therefore, it allows for increasing profits (in the sense of part (i) of Definition \ref{arb}).
\end{example}

It is interesting to compare Proposition \ref{CE} with the result of Proposition \ref{example-gen}. Indeed, both propositions show that, if one starts from a model where $S\in\M(\PP)$ (and, hence, NFLVR trivially holds with respect to $\PP$) and $S$ enjoys the predictable representation property, then arbitrage profits may arise after an absolutely continuous, but not equivalent, change of measure from $\PP$ to $\QQ$. More specifically, Proposition \ref{example-gen} shows that, if $d\QQ=S_T\,d\PP$ fails to define a measure $\QQ\sim\PP$, then the process $S$ allows for arbitrage opportunities (i.e., NA fails) when viewed under $\QQ$. Moreover, due to Proposition \ref{CE}, if the process $S$ can jump to zero, then $S$ also allows for arbitrages of the first kind when viewed under $\QQ$ (compare also with Example \ref{ex-jump}).

\begin{remark}
In the recent paper \citet{RR}, the authors provide a systematic procedure for constructing market models that satisfy NA1 but allow for arbitrage opportunities (i.e., NA fails to hold). Their approach is intimately linked to our results: indeed, they start from a market model where $S$ is a non-negative $\R^d$-valued martingale and then pass to an absolutely continuous, but not equivalent, probability measure whose density process is allowed to reach zero only continuously.
Hence, the change of measure adopted in \citet{RR} has the effect of disrupting the NA component of NFLVR, while, in view of Theorem \ref{main-2}, the NA1 component is preserved.
\end{remark}

The next proposition represents a converse result to Theorem \ref{main-2} and shows that, if the condition $\PP(\tau>\tau_n)=1$ for all $n\in\N$ fails to hold, then one can find a semimartingale $S$ satisfying NFLVR with respect to $\PP$ but allowing for increasing profits (and, hence, arbitrages of the first kind) under $\QQ$. The proof exploits an idea already used in the proof of Theorem 3.16 of \citet{CADJ}.

\begin{proposition}	\label{counter}
Let $\QQ$ be a probability measure on $(\Omega,\F)$ such that $\QQ\ll\PP$ and suppose that $\PP(\tau=\tau_n)>0$ for some $n\in\N$. Then there exists a process $S\in\M(\PP)$ which allows for increasing profits with respect to $\QQ$.
\end{proposition}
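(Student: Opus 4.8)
The plan is to exhibit a uniformly integrable $\PP$-martingale $S$, together with a simple constant strategy, so that under $\QQ$ this strategy produces a non-decreasing and not identically zero wealth process. It suffices to treat $d=1$, since in general one may take $S$ to have all components equal. The mechanism is to trade against the jump of the density process to zero: under $\PP$ this jump occurs with positive probability, so betting against it is no increasing profit, whereas under $\QQ$ the density stays strictly positive and the same bet becomes a sure gain. (Note that any $S\in\M(\PP)$ automatically satisfies NFLVR with respect to $\PP$, so the asserted contrast is as claimed.)

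First I would fix $n\in\N$ with $\PP(\tau=\tau_n)>0$ and set $D:=\{\tau=\tau_n\}$. Since $Z_t\geq 1/n$ for every $t<\tau_n$, one has $\tau_n\leq\tau$, $Z_{\tau_n-}\geq 1/n$ on $D$ (we may and do assume $Z_0>0$ $\PP$-a.s., which holds automatically when $\F_0$ is $\PP$-trivial, so that $\tau_n>0$ on $D$), and on $D$ the identity $\tau=\tau_n$ together with $Z_{\tau_n-}\geq 1/n\neq 0$ forces $Z_{\tau_n}=0$; thus on $D$ the process $Z$ jumps from a value $\geq 1/n$ straight down to $0$ at time $\tau_n\leq T$. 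Since a non-negative $\PP$-martingale is absorbed at $0$, $D\subseteq\{Z_T=0\}$, and therefore $\QQ(D)=\EE_{\PP}[\ind_{\{Z_T=0\}}Z_T]=0$.

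Next I would introduce the bounded, adapted, increasing process $C_t:=\ind_D\,\ind_{\{\tau_n\leq t\}}$ (of integrable variation, with $C_0=0$, using $\tau_n>0$ on $D$), let $C^p$ be its $\PP$-dual predictable projection, and set $S:=C-C^p$. Then $S\in\M(\PP)$, $S_0=0$, and $S$ has finite-variation paths. The key remark is that $C$ is $\QQ$-indistinguishable from the zero process, because $\ind_D=0$ $\QQ$-a.s.; hence $S$ is $\QQ$-indistinguishable from $-C^p$. Taking the strategy $H\equiv-1$ on $(0,T]$ with $H_0=0$, we then have $H\cdot S=-(S-S_0)=-S$, which is $\QQ$-indistinguishable from the predictable, non-decreasing, non-negative process $C^p$; consequently $(H\cdot S)_t\geq 0$ $\QQ$-a.s.\ (so $H\in\A_0(\QQ)$) and $G(H)=H\cdot S$ is $\QQ$-a.s.\ non-decreasing. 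It remains to prove that $\QQ\bigl(G_T(H)>0\bigr)=\QQ(C^p_T>0)>0$.

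This last point is the step I expect to be the real obstacle, and it is the only place where the hypothesis is genuinely used. For it I would compute $\EE_{\QQ}[C^p_T]=\EE_{\PP}[Z_T C^p_T]$ and observe that, $C^p$ being predictable and $Z$ a $\PP$-martingale with terminal value $Z_T$, the key formula for the predictable $\sigma$-field gives $\EE_{\PP}[Z_T C^p_T]=\EE_{\PP}\bigl[\int_0^T Z_{u-}\,dC^p_u\bigr]$ (the predictable projection of the time-constant process $Z_T$ equals $Z_{u-}$); then the defining property of the dual predictable projection, applied with the non-negative predictable integrand $Z_-$, yields $\EE_{\PP}\bigl[\int_0^T Z_{u-}\,dC^p_u\bigr]=\EE_{\PP}\bigl[\int_0^T Z_{u-}\,dC_u\bigr]=\EE_{\PP}[\ind_D\,Z_{\tau_n-}]\geq \tfrac{1}{n}\,\PP(D)>0$, so $\QQ(C^p_T>0)>0$ and $H$ yields an increasing profit with respect to $\QQ$ in the sense of part (i) of Definition \ref{arb}. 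The whole argument hinges on $\EE_{\PP}[\ind_D\,Z_{\tau_n-}]\geq\tfrac1n\PP(D)>0$, i.e.\ on the fact that $Z$ genuinely jumps to zero rather than decreasing to it continuously: this is precisely what prevents the compensator $C^p$ from being $\QQ$-negligible, since a jump of $Z$ to $0$ cannot be announced and therefore its compensator must spread mass onto paths where the jump does not occur — and those carry all the $\QQ$-mass. Everything else is routine bookkeeping on dual predictable projections.
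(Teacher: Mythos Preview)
Your proof is correct and follows essentially the same approach as the paper: compensate an increasing process that vanishes $\QQ$-a.s., take the compensated $\PP$-martingale as $S$, and show via the dual predictable projection identity that the compensator has positive $\QQ$-expectation at $T$. The only cosmetic differences are that the paper compensates $\ind_{\lsi\tau,T\rsi}$ rather than your more targeted $\ind_D\ind_{\lsi\tau_n,T\rsi}$, and builds the sign into $S$ (setting $S:=-(\ind_{\lsi\tau,T\rsi}-B)$ with $H=\ind_{\lsir0,T\rsi}$) instead of into $H$; the final computation $\EE_{\QQ}[G_T(H)]=\EE_{\PP}[Z_{\tau-}\ind_{\{\tau\leq T\}}]>0$ is the same as yours.
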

\begin{proof}
Let $B=(\ind_{\lsi\tau,T\rsi})^p$ be the predictable compensator (see e.g. \citet{JS}, Theorem I.3.17) of the increasing process $\ind_{\lsi\tau,T\rsi}$ and define $S:=-(\ind_{\lsi\tau,T\rsi}-B)$, so that $S\in\M(\PP)$. Since $\tau=\infty$ $\QQ$-a.s., it holds that $S=B$ $\QQ$-a.s. The predictable process $H:=\ind_{\lsir0,T\rsi}$ satisfies $G(H)=H\cdot S=B-B_0\geq0$ $\QQ$-a.s., since $B$ is $\PP$-a.s. increasing and $\QQ\ll\PP$. Moreover, using the properties of predictable compensators (see \citet{J}, Proposition 1.47) together with the $\PP$-martingale property of $Z$, we get
\[
\EE_{\QQ}\bigl[G_T(H)\bigr]
= \EE\bigl[Z_TB_T-Z_0B_0\bigr]
= \EE\left[\int_0^T\!Z_{t-}\,dB_t\right]
= \EE\left[\int_0^T\!Z_{t-}\,d\ind_{\lsi\tau,T\rsi}\right]
= \EE\left[Z_{\tau-}\ind_{\{\tau\leq T\}}\right]>0,
\]
where the last inequality follows from the fact that $\PP(\tau=\tau_n)>0$ for some $n\in\N$, meaning that the process $Z$ has a strictly positive probability of jumping to zero.
\end{proof}

Summing up, in the context of general $\R^d$-valued semimartingale models, Lemma \ref{properties-Z}, Theorem \ref{main-2} and Proposition \ref{CE} together yield the equivalence between the three following statements, where $\QQ\ll\PP$ and according to the notation introduced in \eqref{st}:
\begin{itemize}
\item[(a)] \emph{for every $\R^d$-valued semimartingale $S=(S_t)_{0\leq t\leq T}$ satisfying NA1 with respect to $\PP$, NA1 holds with respect to $\QQ$ as well};
\item[(b)] \emph{$\PP(\tau>\tau_n)=1$, for all $n\in\N$};
\item[(c)] $1/Z\in\Mloc(\QQ)$.
\end{itemize}

\vspace{0.5cm}

\begin{spacing}{1}
\subsubsection*{Acknowledgements}
This research was supported by a Marie Curie Intra European Fellowship within the 7th European Community Framework Programme under grant agreement PIEF-GA-2012-332345. 
The author is thankful to Johannes Ruf for useful comments on an earlier version of the paper.
\end{spacing}

\begin{spacing}{1.1}

\end{spacing}

\end{document}